\definecolor{shadecolor}{rgb}{.8,.8,.8}
\def\Z{\mathbb{Z}}
\def\>{\rangle}
\def\<{\langle}
\def\ie{{i.e.} }
\renewcommand{\v}[1]{\ensuremath{\mathbf{#1}}}
\newcommand{\ket}[1]{\left| #1 \right>}
\newcommand{\bra}[1]{\left< #1 \right|}
\newtheorem*{theo*}{Theorem}
\newtheorem*{lemma*}{Lemma}
\newtheorem{lemma}{Lemma}
\newtheorem*{remark*}{Remark}
\newtheorem*{dfn*}{Definition}
\newtheorem*{prp*}{Proposition}
\newtheorem{prp}{Proposition}
\newtheorem*{conj*}{Conjecture}
\begin{document}
\title{Quantum walks with a one-dimensional coin}

\author{Alessandro Bisio}
\author{Giacomo Mauro D'Ariano}
\author{Marco Erba}
\author{Paolo Perinotti}
\author{Alessandro Tosini}
\affiliation{Universit\`a degli Studi di Pavia, Dipartimento di Fisica, QUIT Group, and INFN Gruppo IV, Sezione di Pavia, via Bassi 6, 27100 Pavia, Italy}

\begin{abstract}
  Quantum walks (QWs) describe particles evolving coherently on a
  graph. The internal degree of freedom corresponds to a Hilbert
  space, called \emph{coin system}. We consider QWs on Cayley graphs
  of some group $G$. In the literature, investigations concerning
  infinite $G$ have been focused on graphs corresponding to
  $G=\mathbb{Z}^d$ with coin system of dimension 2, whereas for
  one-dimensional coin (so called \emph{scalar} QWs) only the case of
  finite $G$ has been studied.  Here we prove that the evolution of a
  scalar QW with $G$ infinite Abelian is trivial, providing a thorough
  classification of this kind of walks. Then we consider the infinite
  dihedral group $D_\infty$, that is the unique non-Abelian group $G$
  containing a subgroup $H\cong\mathbb{Z}$ with two cosets. We
  characterize the class of QWs on the Cayley graphs of $D_\infty$
  and, via a coarse-graining technique, we show that it coincides with
  the class of spinorial walks on $\mathbb{Z}$ which satisfies parity
  symmetry. This class of QWs includes the \emph{Weyl} and the
  \emph{Dirac} QWs. Remarkably, there exist also spinorial walks that
  are not coarse-graining of a scalar QW, such as the \emph{Hadamard
    walk}.
\end{abstract}
\keywords{Quantum walks, Cayley graphs, non-Abelian quantum walks, infinite dihedral group}
\pacs{03.67.Ac, 02.20.-a}
\maketitle

\section{Introduction}
Quantum walks (QWs) are the quantum version of the \emph{classical
  random walks}, which made their first appearance in physics with
Einstein's seminal work on Brownian motion \cite{E05}. A peculiarity
of QWs on graphs with respect to their classical counterpart is that
the vertexes of the graph carry an internal degree of freedom (spin,
helicity, etc.) corresponding to a finite-dimension Hilbert space,
called \emph{coin system}.

Models of QWs have been broadly studied in diverse formulations
\cite{ADZ93,M96,AB01,S03,KRS04,M07}, since they revealed to be
suitable both as a simulation tool---e.g. in \emph{lattice gauge
  theories} \cite{K83,CNK84,AD16}---and as a computational
one---e.g. in designing \emph{quantum algorithms}
\cite{CCD03,A04,MFS05}.  Recently, QWs have also been exploited as
discrete models of spacetime
\cite{BDT15,DP14,ArrighiNJP14,BBDPT15,BDP15,AFF15}. Discrete-time QWs on lattices have been
studied in the continuum limit in
Refs. \cite{BB94,DP14,FS14,BDP16}, recovering Weyl, Dirac and
Maxwell dynamics.

In this manuscript we consider discrete-time QWs on an infinite graph
requiring \emph{locality} and \emph{homogeneity} of the evolution. The
former implies that each site (each vertex of the graph) has a finite
number of first neighbours, while the latter means the
indistinguishability of the sites based on the evolution in a sense
formalized in Ref. \cite{DP14}, where it is proved that these
hypotheses amount to require the graph to be a \emph{Cayley graph} of
a finitely generated group $G$. Cayley graphs, as diagrammatic
counterparts of groups, are convenient means to study QWs exploiting
the group-theoretical machinery.

In the case of Abelian $G$, one can represent the walk in the wave-vectors space via the Fourier transform, resorting to the (one-dimensional) irreducible representations of $G$. This allows one to diagonalize the walk evolution operator and to simply solve the walk dynamics in terms of its \emph{dispersion relations}.

The procedure is not so straightforward in the non-Abelian case. Indeed, a method working in the general case is still lacking, due to the fact that representations of the infinite discrete non-Abelian groups are generally unknown. In Ref. \cite{DEPT15}, a novel technique allowing to tackle this issue in the case of \emph{virtually Abelian groups} has been presented.

A virtually Abelian group $G$ is generally a non-Abelian group with an
Abelian subgroup $H$ of finite \emph{index} (the number of cosets of
$H$ in $G$). This property enables one to define a notion of
wave-vector as an invariant of the dynamics also in the non-Abelian
case, thus solving the dynamics of these particular non-Abelian
QWs. This technique can be viewed as a \emph{coarse-graining} of the
walk: the original virtually Abelian QW on $G$ is unitarily equivalent
to a walk on $H$ with a larger coin system.  

We will apply this method to the most elementary case of \emph{scalar}
QWs, i.e. walks with a one-dimensional coin system. This kind of walks
are the most elementary ones from the point of view of the coin
system, but unfortunately this does not mean that they are the easiest
to treat, since their existence imposes additional constraints on the
graph (see Sec. \ref{cayley}). Scalar QWs on Cayley graph have been
explored in Ref. \citep{ARC08}, where the authors restricted the
investigation to finite groups, classifying scalar QWs on Cayley
graphs with two and three generators. The present investigation
focuses on infinite groups. The framework of scalar QWs differs from
that of staggered QWs,  \ie QWs without coin tossing, recently
considered in the literature
\cite{PhysRevA.71.032347,PhysRevA.91.052319,Santos:2015:MCQ:2822142.2822148}
where the walk is defined by an evolution operator that is the product
of two reflections acting on the site basis.

After reviewing QWs on Cayley graphs, in Sec. \ref{abelian} we first investigate and classify
infinite Abelian scalar QWs, extending the results of Ref. \citep{MD96} to any infinite Abelian
group, and with arbitrary presentations. Then, in Sec. \ref{group-der}, we consider the simplest
case of non-Abelian group $G$ with a subgroup $H \cong \mathbb{Z}$ of index 2. Such a group is the
infinite dihedral group $D_{\infty}$, and we derive all its Cayley graphs admitting a scalar QW with
a coarse-grained scheme having coordination number 2. All the scalar QWs on these Cayley graphs are
derived in Sec. \ref{dihedral-qw}.  We show that their coarse-graining coincides (up to a local
change of basis) with the class of QWs on $\ell^2(\mathbb{Z})\otimes \mathbb{C}^2$ that are
invariant under parity transformation. The walks in this class are studied via the usual
Fourier-transform method and it results that they can exhibit both linear and massive dispersion
relations.

\section{QWs on Cayley graphs}\label{cayley}
In this section we review the notion of quantum walks on Cayley graphs, previously discussed in Refs. \cite{AG06,ARC08,KP09}.

Let $G$ be a group: we can always select a \emph{generating set} $S_+$ for $G$, namely a subset $S_+
\subseteq G$ such that any element of the group can be built as composition of elements $g\in S_+$
and their inverses. In the following we do not assume the generating set to be \emph{symmetric} (a
generating set $S_+$ is called symmetric if $S_+=S_-$, where $S_-$ is the set of inverses
of the elements in $S_+$). In order to specify a group, as well as a generating set $S_+$, a set $R$
of \emph{relators} is also needed, namely some \emph{words} formally built by composition of
elements $g\in S_+$ and corresponding to the identity element $e\in G$. For example, if $R$ is
trivial, one gets the free group on $S_+$.

These two ingredients provide a so-called \emph{presentation} $G=\<S_+|R\>$ of a group. Presentations are not in one-to-one correspondance with groups: given a group $G$, it has in general different presentations. However, any presentation completely specifies a unique group.

Presentations of groups have a convenient geometrical representation: \emph{Cayley graphs}. Given a
group $G$ and a generating set $S_+$ for $G$, the Cayley graph $\Gamma (G, S_+)$ is defined as the
edge-colored directed graph having vertex set $G$, edge set $\{(x, xg); x \in G, g \in S_+\}$, and a
color assigned to each generator $g \in S_+$. Besides, an edge corresponding to a generator $g\in
S_+$ is usually represented as undirected when $g^2=e$. Cayley graphs are indeed in one-to-one
correspondence with presentations. Relators are just closed paths over the graph, i.e.
\emph{cycles}, and conversely any cycle on the graph can be built as composition of some relators.

In the following we will consider Cayley graphs of finitely generated
groups ($|S_+|< \infty$) as the graphs of our quantum walks. A
discrete-time quantum walk on a Cayley graph $\Gamma(G,S_+)$ with an
$s$-dimensional coin system ($s\geq 1$) is a unitary evolution of a
system with Hilbert space $\ell^2(G)\otimes\mathbb{C}^s$ such that
\begin{align}|\psi_{g,t+1}\>=\sum_{h\in S_+}A_{h}|\psi_{gh,t}\>,\label{sumA}
\end{align}
where $0\neq A_h\in M_s(\mathbb{C})$ are the {\em transition matrices} of the walk.  In the
following, we will consider $S_+$ generally non symmetric. In the previous literature \cite{DP14}
the sum in Eq. (\ref{sumA}) was extended to $S_+\cup S_-$. For this reason, for the sake of
uniformity, we will explicitly name the walk {\em monoidal} whenever $S_+$ is not symmetric.
\footnote{The motivation of keeping all matrices nonvanishing originates in Ref. \cite{DP14} from
  the logic of deriving the graph inversely from a set of nonnull matrices. This is relavant from a
  derivation of the QW (more generally quantum automaton) from general topological principles of a
  countable set of interacting systems.}

Considering the right regular
representation $T_g$ of the group $G$, whose action on $\ell ^2(G)$ is defined as $T_g|x\>\coloneqq
|xg^{-1}\>$, we can represent the QW through
\begin{align*}
A\coloneqq \sum_{h\in S_+}T_{h}\otimes A_h.  
\end{align*}
The unitarity conditions for the walk operator $A$ are
$AA^{\dagger}=A^{\dagger}A=T_e\otimes I_s$; these conditions, for a scalar QW of the form
\[
A\coloneqq\sum_{h\in S_+}T_{h} z_h
\]
(where the $z_h\in \mathbb{C}$ are called \emph{transition scalars}), lead to the set of equations:
\begin{equation}\label{unit}
\sum_{\substack{hh'^{-1}=g \\ h\neq h'}} z_hz_{h'}^* = 0,\ \ \sum_{\substack{h^{-1}h'=g \\ h\neq h'}} z_{h}^*z_{h'} = 0,\ \ \sum_h |z_h|^2 =1.
\end{equation}
It is immediate to check that trivial solutions of Eq. (\ref{unit}) with $A=T_h$ can occur only for
monoidal walks QW with singleton $S_+=\{h\}$.  A necessary condition for the existence of solutions
of Eq. (\ref{unit}) is given by the following Lemma.
\begin{lemma}[]\label{quadr}
  Given a Cayley graph $\Gamma(G,S_+)$, a necessary condition for the existence of a scalar
  quantum walk $A = \sum_{h\in S_+}T_hz_h$ on $\Gamma(G,S_+)$ is that, for each ordered pair
  $(h_1,h_2)\in S_+\times S_+$ such that $h_1 \neq h_2$, there exists at least a different pair
  $(h_3,h_4)$ such that $h_1h_2^{-1}=h_3h_4^{-1}$. This is called \emph{quadrangularity condition}.
  \cite{ARC08}
\end{lemma}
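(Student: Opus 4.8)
The plan is to read the quadrangularity condition directly off the first family of unitarity equations in Eq.~(\ref{unit}), exploiting the fact that every transition scalar is nonzero. The key preliminary observation is that, by the standing requirement $0\neq A_h$ specialized to the scalar case, we have $z_h\neq 0$ for every $h\in S_+$; hence any product $z_{h}z_{h'}^*$ with $h,h'\in S_+$ is itself nonzero. This is the only structural input needed, and it turns the necessary condition into an almost immediate consequence of the vanishing of the off-diagonal unitarity sums.

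Concretely, given an ordered pair $(h_1,h_2)\in S_+\times S_+$ with $h_1\neq h_2$, I would set $g\coloneqq h_1 h_2^{-1}$ and note that $g\neq e$ precisely because $h_1\neq h_2$. The pair $(h_1,h_2)$ then contributes the term $z_{h_1}z_{h_2}^*$ to the left-hand side of the corresponding equation $\sum_{hh'^{-1}=g,\,h\neq h'} z_h z_{h'}^*=0$. Since this term is nonzero, the sum cannot reduce to it alone without contradicting the equation; therefore there must exist at least one further summand, arising from a pair $(h_3,h_4)\in S_+\times S_+$ with $(h_3,h_4)\neq(h_1,h_2)$ and $h_3 h_4^{-1}=g=h_1 h_2^{-1}$. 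One checks automatically that $h_3\neq h_4$, since $g\neq e$. This is exactly the asserted quadrangularity condition.

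There is no genuine obstacle to overcome: the statement is essentially a contrapositive reading of unitarity, and the only points worth verifying are that each $z_h$ is nonzero and that $g\neq e$, both of which are immediate. For completeness I would remark that the identical conclusion also follows from the second family of equations $\sum_{h^{-1}h'=g,\,h\neq h'} z_h^* z_{h'}=0$, which yields the same pairing requirement with left quotients in place of right quotients; this confirms that the condition is forced regardless of which of the two off-diagonal unitarity relations one starts from, and that it is only necessary, not sufficient, since it constrains the support pattern of the $z_h$ without fixing their values.
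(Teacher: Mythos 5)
Your proof is correct and follows exactly the route the paper intends: the lemma is stated as an immediate consequence of the unitarity conditions in Eq.~(\ref{unit}) together with the standing assumption $z_h\neq 0$ for all $h\in S_+$, which is precisely your argument (the paper itself defers to Ref.~\cite{ARC08} rather than spelling it out). Your additional checks---that $g\neq e$ forces $h_3\neq h_4$, and that the second family of unitarity equations yields the same conclusion via left quotients---are accurate and complete the argument cleanly.
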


\subsection{Free Abelian QWs}

The case of QWs on Cayley graphs of a free Abelian group---i.e. $G\cong \Z^d$---is the simplest to treat in order to analytically solve the dynamics, since the walk can be easily diagonalized by a Fourier transform. We will label the elements
$\v{x}\in\Z^d$, using the additive notation for the group composition. The right regular representation is decomposed into one-dimensional irreducible representations, since the group is Abelian. One can thus diagonalize $T_{\v{x}}$ in the wave-vector space as follows
\begin{equation*}
\ket{\v{k}} \coloneqq \frac{1}{\left(2\pi\right)^{\frac{d}{2}}}\sum\limits_{\v{x'}\in \Z^d} e^{-i\v{k}\cdot \v{x'}}\ket{\v{x'}},\quad T_{\v x} \ket{\v{k}} = e^{-i \v{k}\cdot \v{x}} \ket{\v k},
\end{equation*}
where $\v k$ belongs to the {\em first Brillouin zone} $\mathcal{B}\subseteq \mathbb R^d$, which is the largest set 
that contains vectors $\v k$ corresponding to inequivalent elements $\ket{\v k}$.
The evolution operator of the walk then reads
\begin{align*}
A= \int_{\mathcal{B}} d\v{k} \ket{\v{k}}\bra{\v{k}}\otimes A_{\v{k}},\quad A_{\v{k}} \coloneqq \sum_{\v{h}\in S_+} e^{-i\v{k}\cdot \v{h}}A_{\v{h}},
\end{align*}
where $A_{\v{k}}$ is unitary $\forall \v{k}\in \mathcal{B}$.  Being $A_{\v{k}}$ unitary, the
eigenvalues are phase factors of the form $e^{i \omega_r (\v{k})}$: the collection $\{\omega_r
(\v{k})\}_{r=1,\ldots , s}$ for $\v k \in \mathcal{B}$ are called the \emph{dispersion relations} of
the QW, and give the kinematics of the walk. Its first and second derivatives, indeed, provide
respectively the \emph{group velocity} and the \emph{diffusion coefficient} of particle states.

\subsection{Coarse-graining of QWs}
Our aim is to study scalar QWs on some group $G$ containing $H\cong \mathbb{Z}$ as a subgroup, with
finitely many cosets in $G$. The minimal choice is $G=H\cup Hr$, where $r$ is a coset
representative. The group $G$ is then virtually Abelian by definition. As a consequence, one can
apply the coarse-graining procedure presented in Ref.  \cite{DEPT15} and study the kinematics of the
walk in the $k$-space, likewise in the purely Abelian case.

This technique is applied through a unitary transformation on the walk operator: it is nothing but a
change of representation of the generators of $G$, allowing one to represent the QW on $G$ as a
coarse-grained QW on $H$ having larger coin system. In particular, two different choices of the subgroup $H$ do not
change the dispersion relations, which are informative about the kinematics of the system. 

The core idea is to choose a partition of $G$ into cosets of $H$, assigning to them a finite set of
labels. The vertexes of the original Cayley graph of $G$ are grouped into clusters---containing one
vertex from each coset---which become the vertexes of the new coarse-grained walk on $H$. The coset
labels designate now an additional internal degree of freedom.

A virtually Abelian quantum walk on $\ell ^2(G)\otimes\mathbb C^s $ can be regarded as an Abelian QW
on $\ell ^2(H)\otimes\mathbb C^{s\times l}$, where $l$ is the index of $H$ in $G$. In the present
case, $s=1$ and $l=2$. The coarse-graining procedure is performed by choosing a \emph{regular
  tiling}, namely a particular coset partition of $G$ with respect to an Abelian subgroup $H$ of
finite index. Accordingly, we will choose $G=Hc_1\cup Hc_2$, with $H\cong \mathbb{Z}$ and $c_1,c_2$
arbitrary coset representatives.

We will denote the generators of $G$ by $h \in S_+$. Having chosen the coset representatives $\{c_j\}_{j=1,2}$, we can define a unitary mapping between $\ell^2\left(G\right)$ and $\ell^2\left(H\right)\otimes\mathbb{C}^2$ as follows
\begin{align*}
U_H:\ell^2\left(G\right) &\rightarrow\ell^2\left(H\right)\otimes\mathbb{C}^2;\; U_H\ket{xc_j}=\ket{x}\ket{j},\; \forall x\in H, 
\end{align*}
for $j=1,2$.  In Ref. \cite{DEPT15} it is shown that---since $\forall x\in H$, $\forall h\in S_+$ and
$\forall c_j$ there exist $x' \in H$ and $j'=\tau(h,j)\in\{1,2\}$ such that $xc_jh^{-1}=x'c_{j'}$---the
coarse-grained generating set $\tilde{S}_+=\{\tilde h\}\subseteq H$ is defined as
\begin{equation}\label{coarse-gen}
\tilde{S}_+\coloneqq\{c_{\tau(h,j)}hc^{-1}_j|h\in S,j=1,2\},
\end{equation}
while their corresponding transition matrices will be given by
\begin{equation}\label{coarse-matr}
(A_{\tilde{h}})_{ij} = \sum_{h\in S_+} z_h \delta_{\tilde{h},c_i hc_{j}^{-1}}\delta_{i,\tau(h,j)}.
\end{equation}
Finally, the coarse-grained evolution operator reads
\[
\begin{split}
\mathcal{R}[A] &= \left( U_H\otimes \mathbb{1} \right) A \left( U_H\otimes \mathbb{1} \right)^{\dagger} = \\
& = \sum_{h\in S_+}\sum_{j=1,2} T_{c_{\tau(h,j)}hc_{j}^{-1}}\otimes  \ket{\tau(h,j)} \bra{j} z_h,
\end{split}
\]
where clearly now $T$ is the right regular representation of $H$. We say that a QW $A$ is a
\emph{coarse-grained scalar QW} if there exists a scalar walk $A'$ such that $A=\mathcal{R}[A']$.

It is known \cite{M96} that the only scalar QWs on $\mathbb{Z}$ are the monoidal QW $A_\pm: =
e^{-i\theta_\pm} T_\pm$, with $\theta_{\pm}$ arbitrary phases, and $T_\pm$ the right/left shift
operators on $\ell^2\left( \mathbb{Z} \right)$. Here we analyze scalar QWs on a group that is
virtually Abelian with Abelian subgroup $H\cong\mathbb{Z}$, starting from the easiest case in which
the index of $H$ is $2$.  Furthermore, we require the coarse-grained QW on
$\ell^2(\mathbb{Z})\otimes \mathbb{C}^2$ to have coordination number two, which restricts the class
of Cayley graphs of $G$ that we will study. Our analysis will lead to a classification of all the
spinorial QW with coordination number 2 on $\ell^2(\mathbb{Z})\otimes \mathbb{C}^2$ that can be
obtained as a coarse-grained scalar QW.

The coarse-graining technique will be applied in Secs. \ref{group-der}
and \ref{dihedral-qw}. In the following Section we classify infinite
Abelian scalar QWs.

\section{Classification of infinite Abelian scalar QWs on Cayley graphs}\label{abelian}
By the \emph{fundamental theorem of finitely generated Abelian groups}, the generic infinite group
of this kind is of the form $G=\mathbb{Z}_{i_1}\times \ldots \times \mathbb{Z}_{i_n}\times
\mathbb{Z}^d$, for $d \geq 1$ and $0\leq n<\infty$. We now give a full characterization of infinite Abelian scalar QWs,
providing a general structure for the evolution operator of these walks in the following Proposition.
\begin{prp}\label{class}
  Let $A$ be the unitary operator of a scalar QW on the Cayley graph of $G=\mathbb{Z}_{i_1}\times
  \ldots \times \mathbb{Z}_{i_n}\times \mathbb{Z}^d$ for $1\leq d<\infty$, and $0\leq n<\infty$.
  Then $A$ splits into the direct sum of one-dimensional monoidal QWs $e^{-i\theta_j}T_j$, with
  $T_j$ shift operators over $\mathbb{Z}^d$, and $j\in\{ 1,2,\ldots,i_1\times i_2\times \ldots\times
  i_n \}$. In particular, the dispersion relations are linear in the wave-vectors.
\end{prp}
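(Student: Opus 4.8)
The plan is to diagonalize the walk operator by the Fourier transform adapted to the product structure $G=\mathbb{Z}_{i_1}\times\cdots\times\mathbb{Z}_{i_n}\times\mathbb{Z}^d =: F\times\mathbb{Z}^d$, reducing the statement to a fact about trigonometric Laurent polynomials of constant modulus. Since $G$ is Abelian, the right regular representation decomposes into one-dimensional characters $\chi=(\phi,\v{k})$, with $\phi$ ranging over the finite dual $\hat F$ (of cardinality $i_1\cdots i_n$) and $\v{k}$ over the Brillouin zone $\mathcal B=\mathbb{T}^d$. Writing each generator as $h=(a,\v{m})\in S_+$, the scalar walk becomes multiplication by the symbol
\begin{equation*}
\hat A(\phi,\v{k})=\sum_{h=(a,\v{m})\in S_+} z_h\,\phi(a)\,e^{-i\v{k}\cdot \v{m}},
\end{equation*}
and the unitarity equations \eqref{unit} are equivalent to $|\hat A(\phi,\v{k})|=1$ for every $\phi\in\hat F$ and every $\v{k}\in\mathcal B$.

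First I would fix a character $\phi$ and collect the coefficient of each shift, $w_{\v{m}}(\phi):=\sum_{a:(a,\v{m})\in S_+} z_{(a,\v{m})}\phi(a)$, so that $\hat A(\phi,\cdot)$ is a Laurent polynomial in $e^{-ik_1},\dots,e^{-ik_d}$ of constant modulus one. The heart of the argument, and the step I expect to be the main obstacle, is the lemma that such a polynomial must be a single monomial. If the support $T=\{\v{m}:w_{\v{m}}(\phi)\neq0\}$ had at least two points, I would pick a linear functional $\ell$ on $\mathbb{R}^d$ separating the finitely many points of $T$ and let $\v{m}_+,\v{m}_-$ be its unique maximizer and minimizer. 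Expanding $|\hat A(\phi,\cdot)|^2\equiv1$, the frequency $\v{g}=\v{m}_+-\v{m}_-\neq \v 0$ is attained by the single pair $(\v{m}_+,\v{m}_-)$, so its Fourier coefficient equals $w_{\v{m}_+}(\phi)\,\overline{w_{\v{m}_-}(\phi)}\neq0$, contradicting that the constant function $1$ has no nonzero frequencies. This is exactly the failure of the quadrangularity condition of Lemma \ref{quadr} applied to the $\mathbb{Z}^d$-block with transition scalars $w_{\v m}(\phi)$, for the extreme difference $\v g$; hence the support is a single point $\v{m}_\phi$, with $|w_{\v{m}_\phi}(\phi)|=1$.

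Finally I would reassemble the blocks. The $\phi$-isotypic subspaces furnish an orthogonal decomposition $\ell^2(G)\cong\bigoplus_{\phi\in\hat F}\ell^2(\mathbb{Z}^d)$ that $A$ respects, since each factor $T^F_a$ acts as the scalar $\phi(a)$ on the $\phi$-component; on each summand the previous step yields $A|_\phi=e^{-i\theta_\phi}T_{\v{m}_\phi}$ for a phase $\theta_\phi$ and a single shift over $\mathbb{Z}^d$. Therefore
\begin{equation*}
A=\bigoplus_{j=1}^{i_1\cdots i_n} e^{-i\theta_j}T_j,
\end{equation*}
one block per element of $\hat F$, which is the claimed decomposition into one-dimensional monoidal QWs. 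On the $j$-th block the dispersion relation is $\omega_j(\v{k})=\theta_j+\v{k}\cdot\v{m}_j$, whose group velocity $\nabla_{\v k}\omega_j=\v{m}_j$ is constant and whose diffusion coefficient vanishes, i.e. it is linear in the wave-vector as asserted. The hypothesis $d\geq1$ is what makes $\mathcal B$ positive-dimensional and gives the Laurent-polynomial argument room to run, while the torsion-free case $n=0$ is recovered as the single empty-product block.
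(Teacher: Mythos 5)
Your proposal is correct, and it shares the paper's overall skeleton: decompose $G=F\times\mathbb{Z}^d$, diagonalize the finite factor via its characters so that $A$ becomes block-diagonal with one scalar walk on $\mathbb{Z}^d$ per character (your $w_{\v{m}}(\phi)$ are exactly the paper's $z_{\v{h}}(\v{j})$), and then force each block to be a single monomial. Where you genuinely diverge is in the key extremal step. The paper takes the pair realizing the Euclidean-norm-maximal difference \eqref{def-v}, proves uniqueness of that pair through the equality case of the triangle inequality, concludes from \eqref{unit} that one of the two block scalars must vanish, and then \emph{iterates}, removing one generator at a time until a single scalar survives; the iteration is forced on them because they argue over all of $S_+$, where block scalars may already vanish. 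You instead restrict to the support $T$ of the block symbol from the outset and pick a generic linear functional separating its finitely many points: the unique extreme pair $(\v{m}_+,\v{m}_-)$ then contributes the sole, manifestly nonzero Fourier coefficient of $\lvert\hat A(\phi,\cdot)\rvert^2$ at the frequency $\v{g}=\v{m}_+-\v{m}_-\neq\v{0}$, contradicting $\lvert\hat A\rvert\equiv 1$ in one shot. This buys a cleaner argument (no iteration, no bookkeeping of which scalars have been removed) and a more robust one: the separating-functional step is the standard Newton-polytope vertex argument and needs no norm at all, whereas the paper's rigidity step leans on strict convexity of the Euclidean norm to get $\v{d}_{14}=\pm\v{d}_{32}$. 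Your identification of the contradiction as a failure of quadrangularity (Lemma \ref{quadr}) for the block scalars is exactly the right reading of what the paper's computation does implicitly, and your closing remark correctly locates where $d\geq 1$ enters---the constant-modulus-implies-monomial lemma needs a positive-dimensional torus, consistent with the paper's observation that $\mathbb{Z}_2\times\mathbb{Z}_2$ admits a nontrivial scalar walk.
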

\begin{proof}
  Let us pose $G=\mathbb{Z}_{i_1}\times \ldots \times \mathbb{Z}_{i_n}\times \mathbb{Z}^d
  =\mathrel{\mathop:} F\times \mathbb{Z}^d$, for $d \geq 1$. We can decompose the elements of $G$
  into one component in $F$ and one in $\mathbb{Z}^d$; accordingly, for some $h\in S_+ \subset
  \mathbb{Z}^d$, we define $R(h) \subseteq F$ such that $\bigcup_{h\in S_+}R(h)\times\{ h\}$ is a set
  of generators for $G$. Let $C_l$ be the right regular representation of the generator of
  $\mathbb{Z}_{i_l}$ \footnote{Also in the finite Abelian case one can decompose the right regular
    representations into irreducible representations: the $k$-space is discrete and the
    diagonalization reads $C_l = \sum_{j=1}^{i_l} \ket{j}\bra{j} e^{2\pi i\left( \frac{j}{i_l}
      \right)}$, where the wave-vectors are the $2\pi j/i_l$.}. Then, defining for any $f\in F$ the
  integers $m_l(f)\in \lbrace 1,\ldots , i_l \rbrace$ such that $T_f = C_1^{m_1(f)} \otimes \ldots
  \otimes C_n^{m_n(f)}$, the diagonalization of the general QW on $\ell^2(G)\otimes \mathbb{C}$
  reads
\begin{align}
\begin{split}\label{abelian-qw}
  A & = \sum_{h\in S_+} \sum_{f\in R(h)} \left( T_{f} \otimes T_{h} \right) z_{(f,h)} = \\
  & = \sum_{j_1=1}^{i_1} \ldots \sum_{j_n=1}^{i_n} \ket{j_1}\bra{j_1} \otimes \ldots \otimes
  \ket{j_n}\bra{j_n} \otimes \sum_{h\in S_+} T_{h} z_h(\v{j}),
\end{split}
\end{align}
where
\begin{equation*}
z_h(\v{j})\coloneqq \sum_{f\in R(h)} z_{(f,h)} e^{2\pi i\left(
        j_1\frac{m_1(f)}{i_1} + \ldots + j_n\frac{m_n(f)}{i_n} \right)},
\end{equation*}
and $\v{j} \coloneqq (j_1,\ldots , j_n)$. The evolution operator \eqref{abelian-qw} is now
block-diagonalizable in the $k$-space as
\[
\int_{\mathcal{B}}\!\! \mathrm{d}\v{k}\; \ket{\v{k}}\!\bra{\v{k}} \left( \sum_{\v{h}\in S_+} e^{-i\v{k}\cdot \v{h}}z_\v{h}(\v{j}) \right)\,, \forall \v{j},
\]
with $A_{\v{k}}(\v{j}) \coloneqq  \sum_\v{h} e^{-i\v{k}\cdot \v{h}}z_\v{h}(\v{j})$ unitary by construction. This leads to the unitarity conditions \eqref{unit}. Take now, for $\v{h}_i,\v{h}_j\in S_+$, the collection $M$ of all the $\v{v}= \v{h}_i-\v{h}_j\in \mathbb{Z}^d$ such that
\begin{equation}\label{def-v}
\lVert \v{v} \rVert = \max_{\v{h}_i,\v{h}_j\in S_+}\lVert\v{h}_i-\v{h}_j\rVert.
\end{equation}
Suppose that, for some $\v{v}'\in M$, there exist two distinct pairs such that
\begin{equation}\label{v}
\v{v}'=\v{h}_1-\v{h}_2=\v{h}_3-\v{h}_4,
\end{equation}
with $\v{v}'\neq 0$ (otherwise $d=0$). Then let's define $\v{d}_{ij} \coloneqq  \v{h}_i-\v{h}_j$: by
definition one has $2\v{v}' = \v{d}_{14}+\v{d}_{32}$ and 
\[
\begin{split}
2\lVert\v{v}'\rVert & = \lVert\v{d}_{14}+\v{d}_{32}\rVert \leq \lVert\v{d}_{14}\rVert+\lVert\v{d}_{32}\rVert \leq \\
&\leq \lVert\v{v}'\rVert + \lVert\v{v}'\rVert = 2\lVert\v{v}'\rVert,
\end{split}
\]
where we used the triangle inequality and the definition \eqref{def-v} of $\v{v}'$. This implies
that $\v{d}_{14}\propto \v{d}_{32}$ and
$\lVert\v{v}'\rVert=\lVert\v{d}_{14}\rVert=\lVert\v{d}_{32}\rVert$, which in turn imply
$\v{d}_{14}=\pm\v{d}_{32}$; this, combined with \eqref{v} and $\v{v}'\neq0$ finally gives
\[
\v{h}_1=\v{h}_3,\ \v{h}_2=\v{h}_4,
\]
i.e. the pair is unique. Then, by the unitarity conditions \eqref{unit}, one has $z_{\v{h}_1}(\v{j})z_{\v{h}_2}^*(\v{j})=0$, namely e.g. $z_{\v{h}_1}(\v{j})=0$. The above argument can thus be iterated removing $\v{h}_1$ from the set $S_+$: one finally concludes that, for each $\v{j}$, just one transition scalar $z_{\tilde{\v{h}}(\v{j})}(\v{j})$ is nonvanishing, namely
\begin{equation*}
A_{\v{k}}(\v{j}) = \sum_{\v{h}\in S_+} \delta_{\tilde{\v{h}}(\v{j}),\v{h}} z_\v{h}(\v{j})e^{-i\v{k}\cdot \v{h}} = z_{\tilde{\v{h}}(\v{j})}(\v{j})e^{-i\v{k}\cdot \tilde{\v{h}}(\v{j})},
\end{equation*}
with $z_{\tilde{\v{h}}(\v{j})}(\v{j}) \eqqcolon e^{-i\theta_{\tilde{\v{h}}}(\v{j})}$ arbitrary phase factor
(by unitarity of $A_{\v{k}}(\v{j})$). We now conveniently define $\v{h}_j \coloneqq
\tilde{\v{h}}(\v{j}),\theta_j \coloneqq \theta_{\tilde{\v{h}}}(\v{j})$ and substitute in
\eqref{abelian-qw}. Thus we finally conclude that any infinite Abelian scalar QW is given by the
direct sum of scalar walks on $\mathbb{Z}$, namely
\begin{equation*}
A=\bigoplus_{j\in I}e^{-i\theta_j}T_j,
\end{equation*}
or else, in the Fourier representation
\[
A =  \int_{\mathcal{B}} \!\!\mathrm{d}\v{k} \; \bigoplus_{j\in I} e^{-i (\v{k} \cdot \v{h}_{j} + \theta_{j})} \otimes \ket{\v{k}}\!\bra{\v{k}},
\]
for $I=\{ 1,2,\ldots,i_1\times i_2\times \ldots\times i_n \}$ and with $\v{h}_{j}\in S_+\subset
\mathbb{Z}^d$, $\theta_{j}$ arbitrary phases. This finally proves that the dispersion relations are
linear in $\v{k}$.
\end{proof}

Notice that the argument of the proof does not hold in the case of a general scalar QW on a finite
Abelian group $F$. For example, it's easy to verify that the Cayley graph corresponding to
$\mathbb{Z}_2 \times \mathbb{Z}_2 = \langle g_1,g_2 | g_1^2,g_2^2,g_1g_2g_1^{-1}g_2^{-1} \rangle$
(square graph) admits a nontrivial scalar quantum walk.

\section{Scalar QWs on the infinite dihedral group}

\subsection{Classification of the Cayley graphs}\label{group-der}
We provided a full classification of infinite Abelian scalar
QWs. Accordingly, in the following we will consider just non-Abelian
scalar QWs. We now aim to derive all the possible non-Abelian groups
$G \cong \mathbb{Z}\cup \mathbb{Z}r$ and their Cayley graphs
satisfying the quadrangularity condition.

It is easy to show that an index-2 subgroup is always normal: by contradiction, let $G$ be an arbitrary group and $H$ be a index-2 subgroup which is not normal in $G$. Accordingly, for some $x_1,x_2 \in H$, the relation $rx_1r^{-1} = x_2r$ holds, and this implies $rx_1 = x_2r^2$. However $r^2$ must be equal to some $xr$, with $x \in H$, otherwise from the previous equation one would have $r \in H$. On the other hand, $r^2 = xr$ reads $r = x \in H$, which is absurd. Then an index-2 subgroup is always normal: left and right cosets coincide. We conventionally choose right cosets to perform the coarse-graining.

Choosing $G$ to be non-Abelian, let us pose $H = \langle a \rangle$: one has $rar^{-1}=a^m$ for some
integer $m\neq 0,1$ (by normality of $H$); then $a=r^{-1}a^mr=(r^{-1}ar)^m=a^{lm}$, for
$l=\frac{1}{m}$ integer: the only possibility is $m=-1$. Thus we have $rar^{-1}\eqqcolon\varphi(a)=a^{-1}$.
Now we prove that $r^2=e$. Indeed, it must be $r^2\in H$. Let's now suppose that $r^2=a^p$: then one
has $r^{-1}a^p=a^pr^{-1} = r^{-1}a^{-p}$, implying $p = 0$ and finally $r^2 = e$. Accordingly, since
defining $C = \langle r|r^2 \rangle$ one has $G=HC$ and $H\cap C=\lbrace e\rbrace$, it follows
that
\[
G = H \rtimes_{\varphi} C \cong \mathbb{Z} \rtimes_{\varphi} \mathbb{Z}_2 = D_{\infty},
\]
namely the infinite dihedral group, with the inverse map $\varphi$ being the only nontrivial
automorphism of $\mathbb{Z}$ that achieves the semidirect product with $\mathbb{Z}_2$.

Since the cosets are mutually disjoint (they define equivalence
classes), the elements of each coset of $H$ in $G$ define a distinct
subset of vertexes of a Cayley graph of $G$; in fact, the union of
these subsets fills all the vertexes associated to the Cayley graph of
$G$. Each element of $H$ is in one-to-one correspondence with an
element of $Hr$ through elements of the form $a^nr$.

We now derive the admissible Cayley graphs of $D_{\infty}$ satisfying the quadrangularity condition
of Lemma \ref{quadr}. In order to find the coarse-grained generators $\tilde{h}$, one has to
explicitly compute the set $\tilde{S}_+$ in Eq. \eqref{coarse-gen}. The $\tilde{h}$ depend in general on the cosets
representatives (which are arbitrary): accordingly, we shall pose a general form $c_1 = a^{m},c_2 =
a^{m'}r$.

We are interested in walks represented on $\ell^2(\mathbb{Z})\otimes \mathbb{C}^2$ with coordination
number 2. Correspondingly, for the coarse-grained generators, we shall impose the condition
\begin{equation}\label{first-cond}
\tilde{h}\in\{e,a,a^{-1}\}.
\end{equation}
We will then exclude the case $a^{l}\in S_+$ with $|l| \geq 2$, since by \eqref{coarse-gen},
choosing for example $j=1$ one would have $\tilde h=a^l$ which would give rise to coarse-grained
walks with coordination number larger than two. Moreover, we can always include $e$ in $S_+$, since
by \eqref{coarse-gen} the identity element is invariant under coarse-graining.

\textbf{Case $\mathbf{a\in S_+}.$} All the generators beside $a$ and $e$ belong
to the coset $Hr$, and are thus of the form $a^nr$. Moreover,
combining \eqref{coarse-gen} and \eqref{first-cond}, we must have
$|n-(m'-m)|\leq 1$, namely
$h\in\{a^{(m'-m)}r,a^{(m'-m)+1}r,a^{(m'-m)-1}r\}$.  By
quadrangularity, we need some generators $h,h'\in Hr$ such that
$a^2=hh'^{-1}$, implying that $a^{(m'-m)\pm1}r\in S_+$.  Moreover, since
$h'h^{-1}=a^{-2}$ by quadrangularity it is also $a^{-1}\in S_+$. We can
possibly include $a^{(m'-m)}r$ in $S_+$, having $e$ as coarse-grained
generator. However, it is easy to check that $\forall m,m'$ these
choices give rise, topologically, to the same Cayley graph (modulo a
constant left-translation $a^{(m'-m)}$). Here it follows an example
for the choice $m'-m=2$ (one moves between sites horizontally through
$a^{\pm 1}$, while vertically through $r$, which has no associated edges in
this example):
\begin{center}
\includegraphics[width=.8\linewidth]{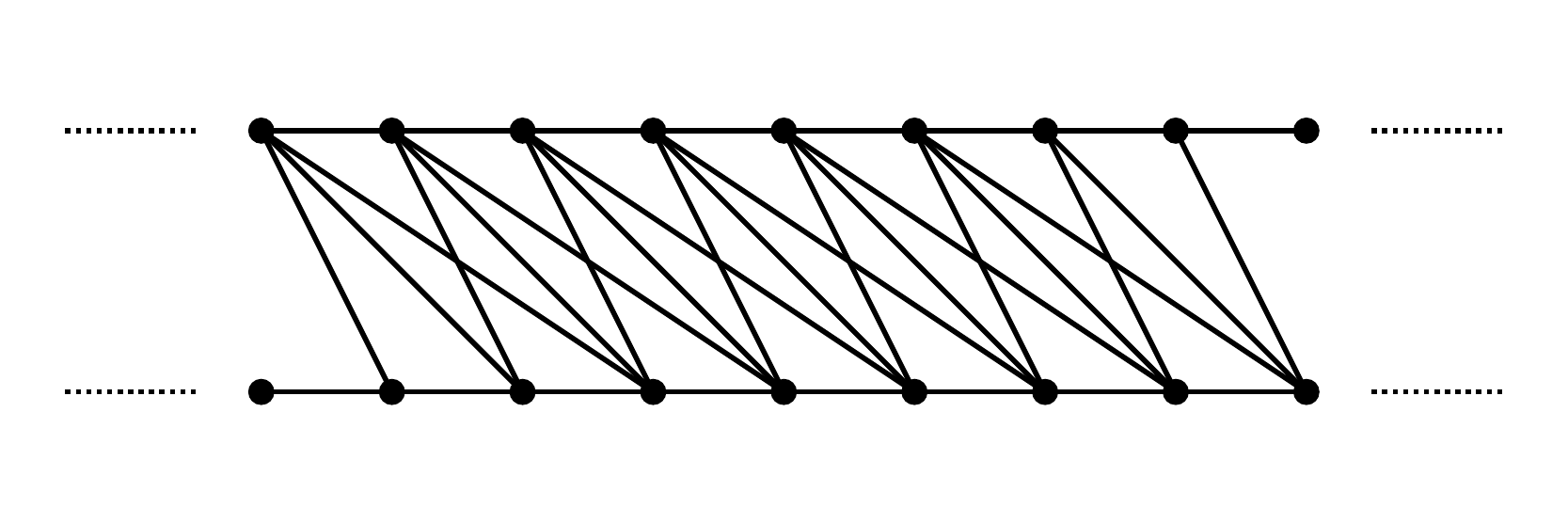}
\end{center}
Accordingly, one can just set $m=m'=0$, namely:
\begin{center}
\includegraphics[width=.8\linewidth]{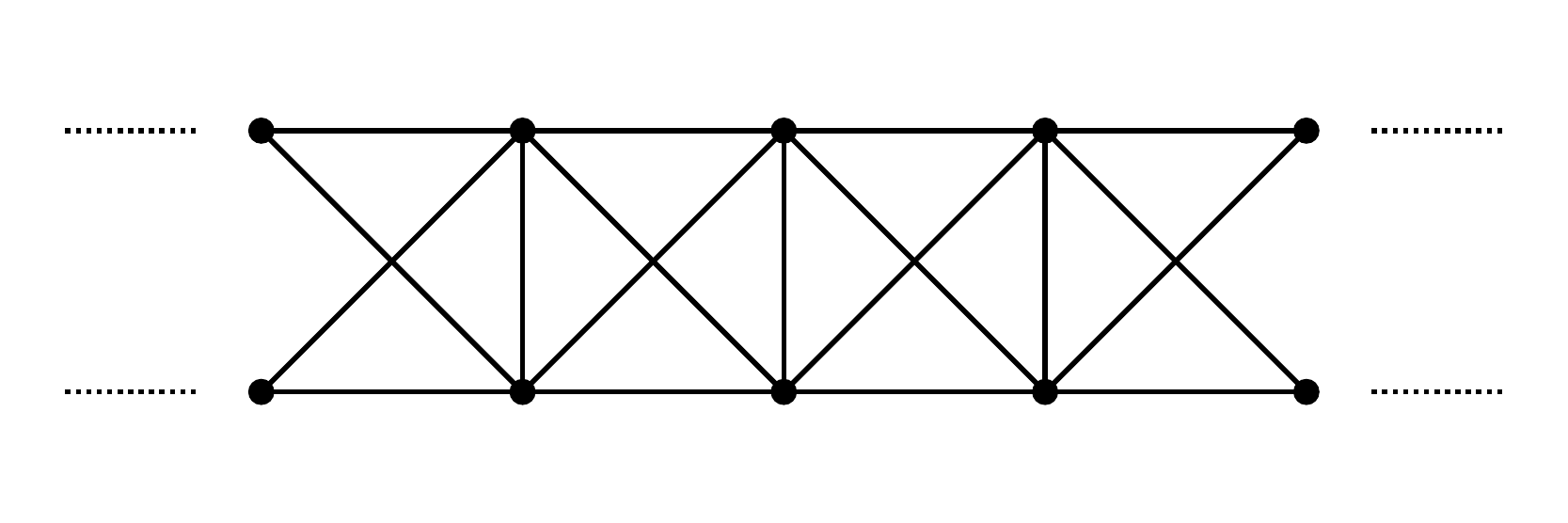}
\end{center}

\textbf{Case $\mathbf{a\not\in S_+}.$} From the previous case we know
that $a\in S_+\Leftrightarrow a^{-1}\in S_+$, then obviously
$a\not\in S_+\Rightarrow a^{-1}\not\in S_+$. Then
$S_+\subseteq \{e,h_i=a^{n+i}r| i=-1,0,1\}$. However, for any pair
$(h_i,h_j)$ with $i\neq j$, there does not exist a different pair $(h,h')$ such that $hh'^{-1}=h_ih_j^{-1}=a^{2}$,
thus violating quadrangularity. This rules out the case $a\not\in S_+$.

Colouring consistently the graph derived, one finds the most general admissionmissible Cayley graph of $D_{\infty}$, which is shown in Fig. \ref{fig:fig2} together with the corresponding presentation.

\begin{figure}
\includegraphics[width=1\linewidth]{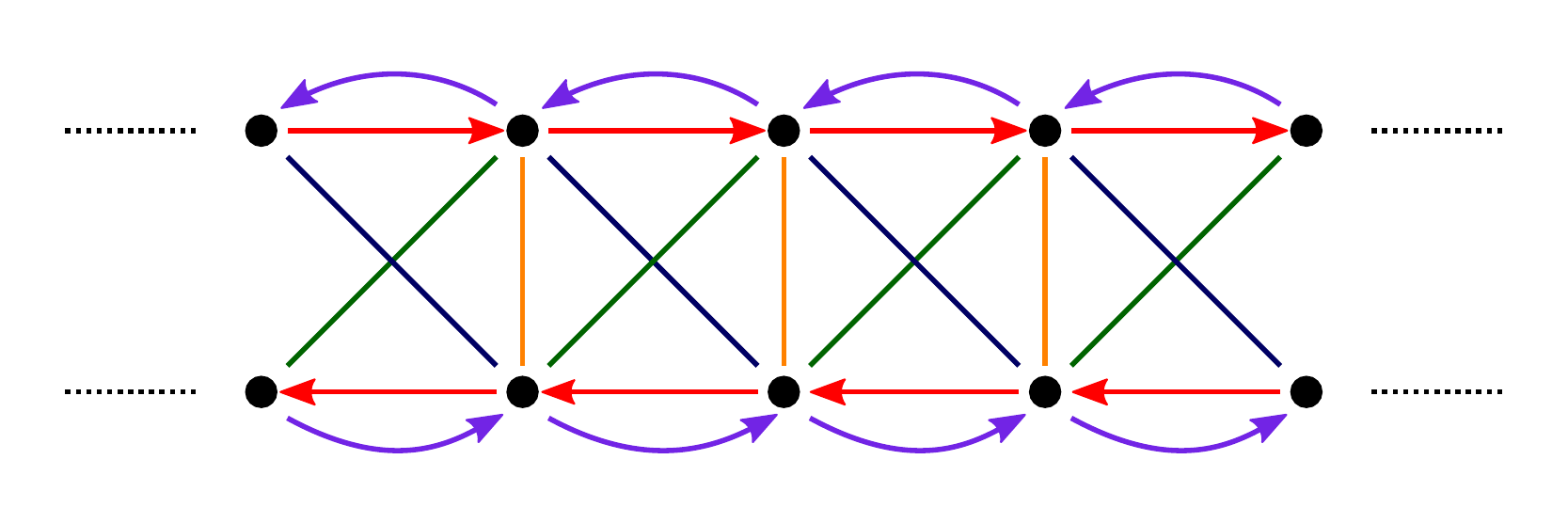}
\caption{(colors online) The most general Cayley graph of the infinite dihedral group which admits a scalar QW with coarse-graining on $\mathbb Z$ with coordination number two. The group presentation is:  $D_{\infty} = \langle a,a^{-1},b,c,d|aa^{-1},b^2,c^2,bda^{-1},cda,bab^{-1}a,cac^{-1}a,bca^{-2}\rangle$. The generators---namely $a$ (red), $a^{-1}$ (violet), $b$ (dark blue), $c$ (green) and $d$ (orange)---are associated to edges of the graph, each corresponding to a transition scalar of the walk. Another Cayley graph of $D_{\infty}$ with the same properties can be obtained by dropping $d$ and the relators containing it. Moreover, one can include $e$ in the generating set, which would correspond to a loop at each site.
}\label{fig:fig2} 
\end{figure}
\begin{figure}
\includegraphics[width=1\linewidth]{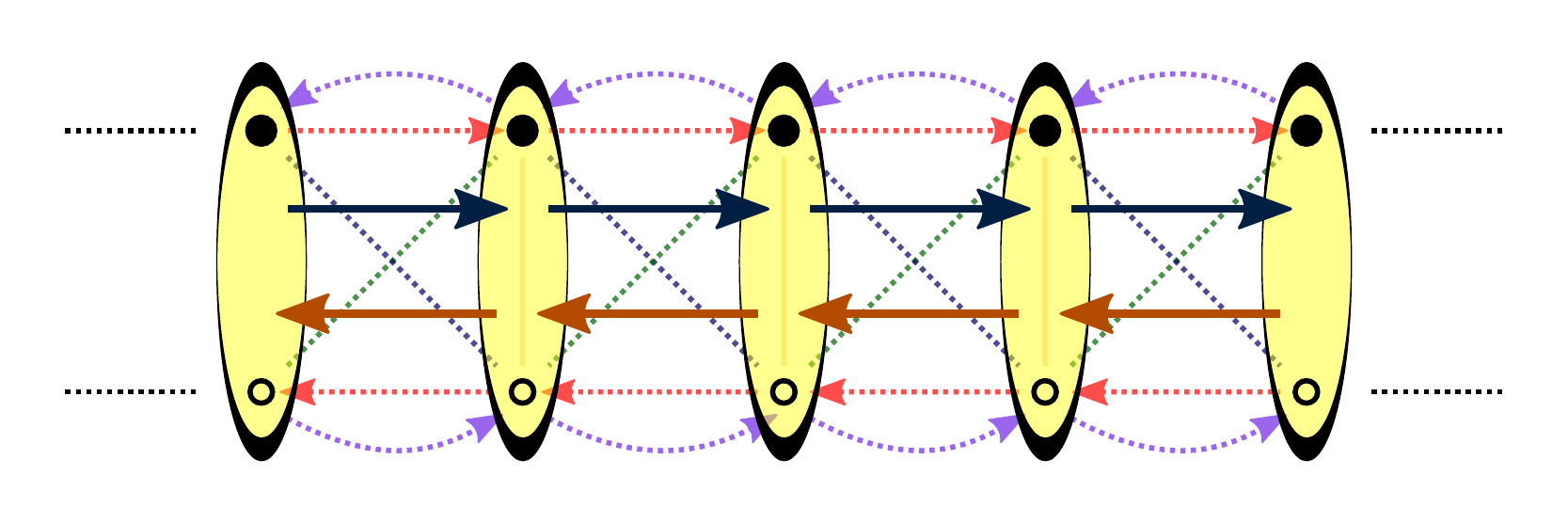}
\caption{(colors online) Realization of the scalar QW in
  Fig. \ref{fig:fig2} as a quantum walk on
  $\mathbb{Z}=\<a,a^{-1}|aa^{-1}\>$ with a two-dimensional coin
  system. The original vertexes are grouped into pairs, realizing an
  additional helicity degree of freedom, and each edge---namely $a$
  (dark blue) and $a^{-1}$ (brown)---is associated to a transition
  matrix. The cosets are associated to an element of a basis for
  $\mathbb{C}^2$: we choose the canonical basis, that is
  $c_1\rightarrow (1,0)$ (full circles) and $c_2\rightarrow (0,1)$
  (empty circles).  }\label{fig:fig3}
\end{figure}

\subsection{Classification of the scalar QWs}\label{dihedral-qw}
The scalar QWs on $D_{\infty}$ are derived in Appendix \ref{der}. The
transition matrices of the coarse-grained QWs, computed choosing $\{ \ket{1},\ket{2} \}$ to be the canonical basis of $\mathbb{C}^2$ and using
equations \eqref{coarse-matr}, are:
\begin{gather*}
A_{+a}=
\begin{pmatrix}
  z_{a}  &  z_{b}  \\
  z_{c}  &  z_{a^{-1}}
\end{pmatrix},\ 
A_{-a}=
\begin{pmatrix}
  z_{a^{-1}} &  z_{c}  \\
  z_{b}  &  z_{a}
\end{pmatrix},\ 
A_e=
\begin{pmatrix}
  z_{e} &  z_{d}  \\
  z_{d}  &  z_{e}
\end{pmatrix}.
\end{gather*}
In Fig. \ref{fig:fig3} one finds a graphical scheme of the
coarse-graining. Given $A_k\coloneqq e^{-ik}A_{+a}+e^{ik}A_{-a}+A_e$,
since the $z_h$ are defined up to an overall phase factor, one can
always take $A_k \in \mathsf{SU}(2)$. 



The solutions of the unitarity conditions (see Appendix \ref{der})
give that the coarse-grained scalar QWs are of the form
\begin{align}\label{eq:scalar-cg}
A_k =  e^{i\theta\sigma_x} A_k^D e^{i\theta'\sigma_x},
\end{align}
with $\theta,\theta'\in(-\pi/2,0)\cup(0,\pi/2)$, and $A_k^D$ is the Dirac QW in one space dimension \cite{BDT15}
\begin{equation}\label{eq:dirac}
\begin{aligned}
A_k^D=\begin{pmatrix}
   \nu e^{-ik} &  i s\mu  \\
  i s \mu  &  \nu e^{ik}
\end{pmatrix},\quad \nu^2+\mu^2=1,\quad s=\pm 1.
\end{aligned}
\end{equation}
We denote by $\mathcal{W}_{CG}$ the set of coarse-grained scalar QWs,
i.e.  the QWs $A'_k$ such that $A'_k=UA_kU^\dag$ with $A_k$ obeying
Eq.~\eqref{eq:scalar-cg} and $U$ being a local change of basis---say
$U$ does not depend on $k$.

Let us now consider parity invariant QWs $A_{k}^P$ on
$\ell^2(\mathbb{Z})\otimes \mathbb{C}^2$, i.e.
\begin{equation*}
\begin{aligned}
P A_{k}^P P^\dag=A_{-k},\qquad P=P^\dag=P^{-1},
\end{aligned}
\end{equation*}
where $P$ gives a unitary representation in $\mathbb{C}^2$ of the
parity transformation. Assuming that the parity is not represented
trivially, namely $P\neq I$, following the technique of Ref.~\cite{DP14}
one obtains the full characterization of the class of parity invariant QWs:
\begin{equation}\label{eq:parity-qw}
\begin{aligned}
A_k^P=U A'_k U^\dag,\qquad A'_k=e^{i\varphi\sigma_x} A_k^D,
\end{aligned}
\end{equation}
with $\varphi\in[-\pi/2,\pi/2]$, and $U$ a local change of basis. It is immediate to observe that the
walk $e^{i\varphi\sigma_x} A_k^D$ is parity
invariant with $P=\sigma_x$.

We denote by $\mathcal{W}_{P}$ the set of parity invariant QWs, and we
observe that this set coincided with $\mathcal{W}_{CG}$. We have then
proved the following result.

\begin{prp} The set of coarse-grained scalar QWs with coordination number two on $D_\infty$
  coincides with the set of parity invariant QWs on $\ell^2(\mathbb{Z})\otimes \mathbb{C}^2$.
\end{prp}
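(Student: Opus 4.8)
The plan is to prove the two inclusions $\mathcal{W}_{CG}\subseteq\mathcal{W}_P$ and $\mathcal{W}_P\subseteq\mathcal{W}_{CG}$, exploiting the fact that both sets have already been reduced to explicit generating forms: the coarse-grained form \eqref{eq:scalar-cg} and the parity-invariant form \eqref{eq:parity-qw}, each understood up to an arbitrary $k$-independent (local) change of basis $U$. Since conjugation by a local $U$ is built into the definition of both classes, it suffices to show that the two generating forms, $e^{i\theta\sigma_x}A_k^D e^{i\theta'\sigma_x}$ and $e^{i\varphi\sigma_x}A_k^D$, are carried into one another by a single local unitary, and then to check that the admissible parameter ranges match.

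The crux is the elementary identity obtained by conjugating with the local unitary $e^{i\theta'\sigma_x}$, which commutes with $\sigma_x$ and does not depend on $k$:
\[
e^{i\theta'\sigma_x}\,(e^{i\theta\sigma_x}A_k^D e^{i\theta'\sigma_x})\,e^{-i\theta'\sigma_x}=e^{i(\theta+\theta')\sigma_x}A_k^D.
\]
The right-hand side is precisely the parity-invariant generator \eqref{eq:parity-qw} with $\varphi=\theta+\theta'$. Hence every element of $\mathcal{W}_{CG}$ lies in $\mathcal{W}_P$, provided $\theta+\theta'$ can be brought into $[-\pi/2,\pi/2]$. Since $\theta,\theta'\in(-\pi/2,0)\cup(0,\pi/2)$ give $\theta+\theta'\in(-\pi,\pi)$, I would reduce values with $|\theta+\theta'|>\pi/2$ by a shift of the exponent by $\pm\pi$, using $e^{i\pi\sigma_x}=-I$ together with the observation that $-A_k^D$ is again a Dirac walk of the form \eqref{eq:dirac} after the relabelling $\nu\mapsto-\nu$, $s\mapsto-s$. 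This absorbs the sign without leaving the class and establishes $\mathcal{W}_{CG}\subseteq\mathcal{W}_P$.

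For the converse I would read the identity backwards: given a parity-invariant generator $e^{i\varphi\sigma_x}A_k^D$ with $\varphi\in[-\pi/2,\pi/2]$, choose nonzero $\theta,\theta'\in(-\pi/2,0)\cup(0,\pi/2)$ with $\theta+\theta'=\varphi$, so that $e^{i\varphi\sigma_x}A_k^D=e^{i\theta'\sigma_x}(e^{i\theta\sigma_x}A_k^D e^{i\theta'\sigma_x})e^{-i\theta'\sigma_x}$ exhibits it as a local conjugate of a $\mathcal{W}_{CG}$ generator. Such a splitting exists for every $\varphi$ in the closed interval: for $\varphi\neq0$ take $\theta=\theta'=\varphi/2\in(-\pi/4,\pi/4)\setminus\{0\}$, while for $\varphi=0$ take $\theta'=-\theta$ with any $\theta\in(0,\pi/2)$. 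This yields $\mathcal{W}_P\subseteq\mathcal{W}_{CG}$ and hence the claimed equality.

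I expect the main obstacle to be bookkeeping at the boundary of the parameter domains rather than any deep difficulty: the coarse-grained form forbids $\theta=0$ or $\theta'=0$, whereas the parity form admits $\varphi\in\{0,\pm\pi/2\}$, so one must verify that these distinguished values are nonetheless realized—the case $\varphi=0$, recovering the pure Dirac walk, being the one that requires the $\theta'=-\theta$ trick. A secondary point to handle carefully is the reduction modulo $\pi$ in the first inclusion, which relies on $-A_k^D$ remaining a legitimate Dirac walk; I would also record, as used implicitly throughout, the parity covariance $\sigma_x A_k^D\sigma_x=A_{-k}^D$ that makes $e^{i\varphi\sigma_x}A_k^D$ parity invariant with $P=\sigma_x$.
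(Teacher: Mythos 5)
Your proposal is correct and takes essentially the same route as the paper: the paper proves this proposition precisely by observing that the two already-derived normal forms \eqref{eq:scalar-cg} and \eqref{eq:parity-qw} define the same class up to a $k$-independent change of basis, which is exactly the identification you perform via conjugation by $e^{i\theta'\sigma_x}$ to get $e^{i(\theta+\theta')\sigma_x}A_k^D$. Your explicit parameter bookkeeping---the mod-$\pi$ reduction using $-A_k^D=A_k^D|_{\nu\to-\nu,\,s\to-s}$ (equivalently, the global-phase freedom in the $z_h$) and the $\theta'=-\theta$ splitting realizing $\varphi=0$---simply fills in the details the paper compresses into ``we observe that this set coincided with $\mathcal{W}_{CG}$.''
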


We notice that the parity symmetry is inherited, in the
coarse-graining procedure, from the particular automorphism $\varphi$ realizing the semidirect product $\mathbb{Z} \rtimes_{\varphi} \mathbb{Z}_2$,
which in this case is the inverse map. The popular Hadamard walk
\cite{AB01}, which is not parity invariant, cannot be obtained by the
coarse-grained of scalar QW.

The dispersion relations of the parity invariant QWs are of the form $\pm \omega(k)$, with
 \begin{align}\label{disp-rel}
   \begin{aligned}
    & \omega(k) = \arccos \left( \delta \cos k + \gamma \right) , \\
     &\delta,\gamma \in \mathbb{R}, |\delta \pm \gamma| \leq1.
   \end{aligned}
\end{align}
\begin{figure}
\includegraphics[width=.8\linewidth]{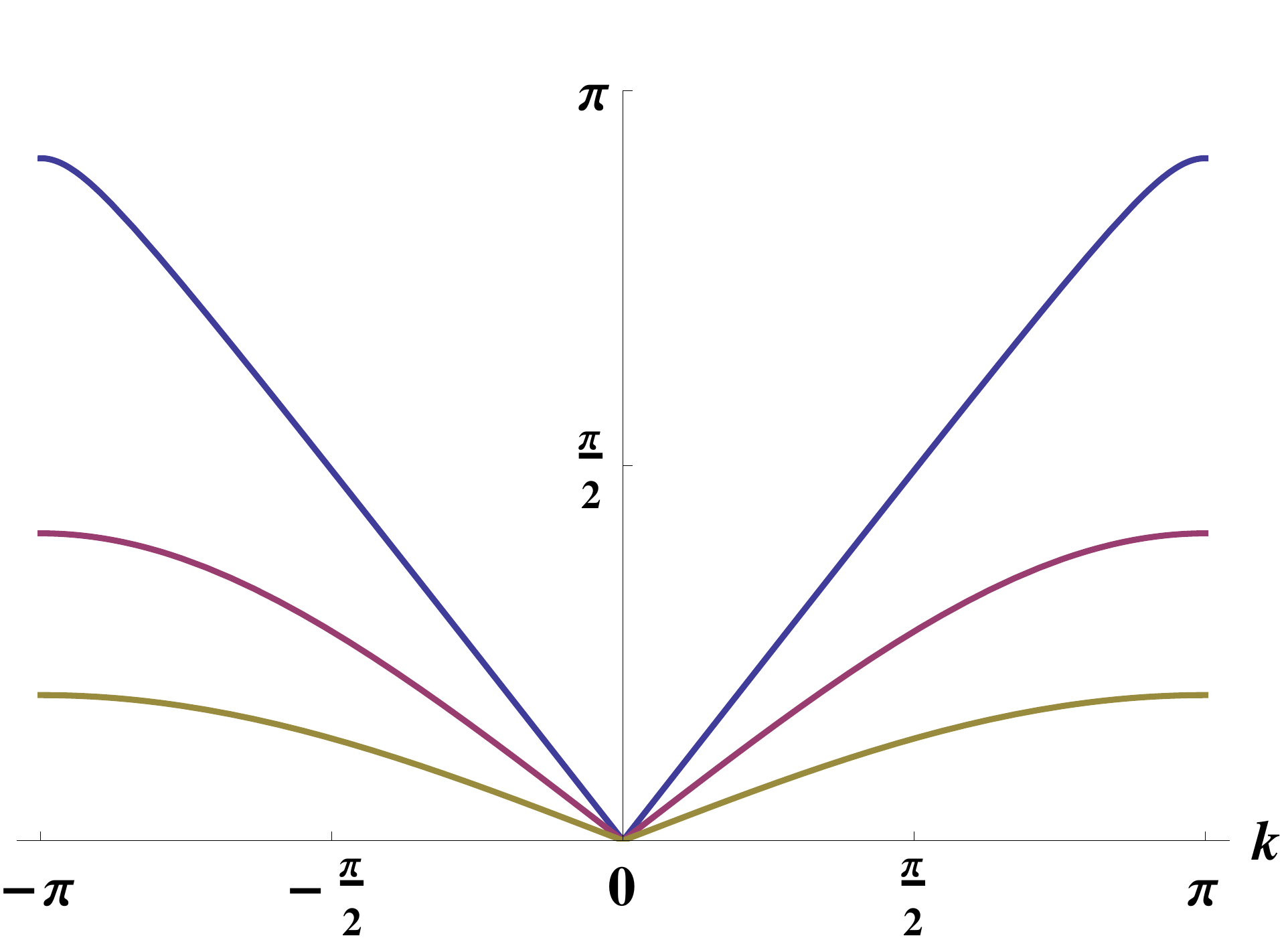}
\caption{(colors online, the quantities plotted are dimensionless) Plot of $\omega(k) = \arccos \left( \delta \cos k + \gamma
  \right)$, for (from top to bottom) $\delta =0.98,0.36,0.09$ and $\delta + \gamma =1$. This class of QWs exhibits,
  for $|k| \approx 0$ and positive $\delta$, a massless (Weyl) dispersion relation (up to a rescaling of $k$): $\omega(k) \approx \delta |k|$. For $|k|\approx\pi$, $\omega(k)$ becomes dispersive (massive).}\label{fig:fig4}
\end{figure}
\begin{figure}
\includegraphics[width=.8\linewidth]{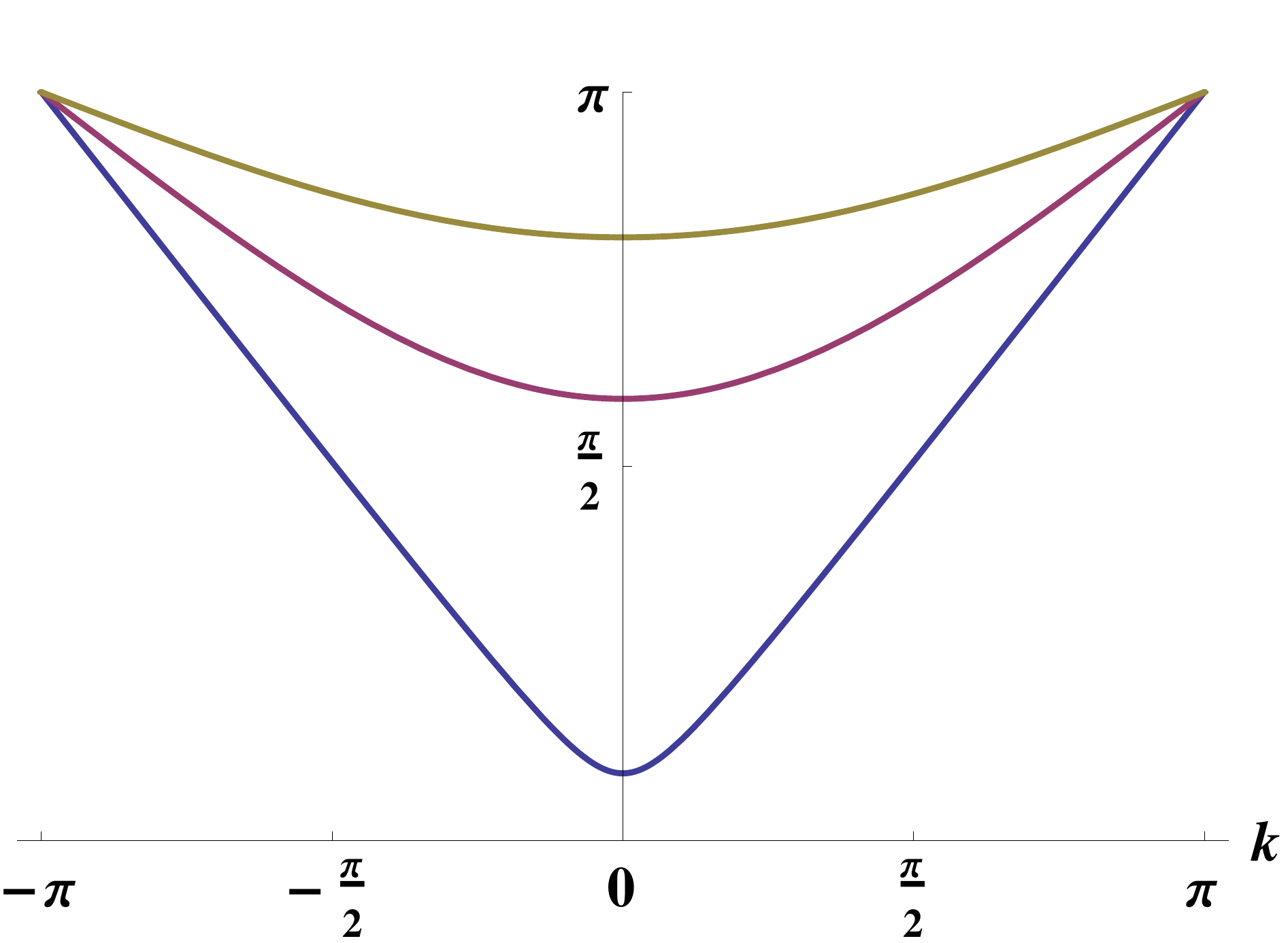}
\caption{(colors online, the quantities plotted are dimensionless)
  Plot of $\omega(k) = \arccos \left( \delta \cos k + \gamma \right)$,
  for (from bottom to top) $\delta =0.98,0.36,0.09$ and
  $\delta - \gamma =1$. This class of QWs exhibits a massive
  dispersion relation for $|k| \approx 0$, and a a massless one for
  $|k| \approx \pi$. The Dirac dispersion relation is recovered for
  $|k|\approx 0$ and $\delta \approx 1$. Notice that these dispersion
  relations are the same as those in Fig. \ref{fig:fig4}, apart from a
  transformation
  $\omega(k)\rightarrow \pi -\omega(k+\pi )$.}\label{fig:fig5}
\end{figure}
For any value of $\gamma,\delta$, the minimum of $\{\omega(k),-\omega(k)\}$ is always attained at
$k_0=0$, and that around $k_0$ the behaviour can be either flat, or $\pm|k|$ plus a constant, or
smooth.  We notice that when $\gamma =0$ we recover (up to a local change of basis) the Dirac QW
$A^D_k$.  When $\delta =1 $ we recover the Weyl QW $A_k = \exp(-ik \sigma_z)$ which describes the
dynamics of massless particles with a dispersion relation which is linear in $k$.  We notice that
when $\delta + \gamma = 1$ the QW exhibits a non-dispersive behaviour for $|k| \approx 0$ and a
dispersive behaviour for greater values of $|k|$.  The dispersion
relations \eqref{disp-rel} are
plotted for some values of the parameters $\delta, \gamma$ in Figs. \ref{fig:fig4},\ref{fig:fig5}.



\section{Conclusions}
We reviewed the notion of \emph{quantum walk on Cayley graph} with the
focus on scalar QWs. We also reviewed a coarse-graining technique that
allows to unitarily map a scalar QW on a virtually Abelian group to a
coined QW on an Abelian group, what we call a coarse-grained scalar
QW.

The first result we found is a classification of infinite Abelian
scalar QWs (on Cayley graphs with arbitrary presentations), which
turn out to be trivial from a dynamical point of view, meaning that
they are given by a finite direct sum of shift operators times a phase
factor. In particular, this implies that this class of QWs does not
exhibit a massive dispersion relation. This result extends the
previous results of Ref. \cite{MD96}, concerning $\mathbb{Z}^d$.

We then investigated scalar QWs on a the infinite dihedral group
$D_\infty$, which is the unique non-Abelian group $G$ that contains a
subgroup $H\cong \mathbb{Z}$ with index two. We first derived all the
Cayley graphs of $D_\infty$ that allow for a scalar QW. Then we
classified all the admissible QWs over the above Cayley graphs.  

We notice that the unitarity conditions for a QW involve relations
among four generators: if one solves them for the dihedral group, the
derived transition amplitudes are a solution also for the QWs defined
over the dihedral groups $\mathbb{Z}_n \rtimes_{\varphi} \mathbb{Z}_2$
$\forall n \geq 4$, corresponding to the same presentation with the
additional condition $a^n = e$.  In general, transforming some non-cyclic elements
into cyclic ones on infinite groups
presentations allows one to recover QWs on finite groups starting from
QWs on infinite ones.

Finally we have shown that the class of QWs corresponding to the
coarse-grained scalar QWs over $D_\infty$ coincides, up to a local
change of basis, to the class of QWs over
$\ell^2(\mathbb{Z})\otimes \mathbb{C}^2$ that are invariant under
parity transformation. In this class we find QWs whose dispersion
relations can exhibit a massive behaviour. 


Interestingly, the coarse-graining technique allows one to build a
bridge relating scalar and spinorial quantum walks, studying the
symmetries of the latter as inherited from the underlying Cayley
graph. Existence conditions in scalar QWs are more selective than the
spinorial ones, and one can't recover all possible spinorial QWs
starting from a scalar one. For example the Hadamard QW, which is not
parity invariant, cannot be obtained a coarse-graining of a scalar QW.
On the speculative side, this shows the crucial role played by the coarse-graining in the emergence
of parity symmetry and helicity in one dimension.

\appendix

\acknowledgments
This work has been supported in part by the Templeton Foundation under the project ID\# 43796 \emph{A Quantum-Digital Universe}.

\section{Derivation of the dihedral QWs}\label{der}
Considering the polar representation $z_h = |z_h|e^{i\theta_h}$ for the transition scalars, from the unitarity conditions \eqref{unit} one has (possibly with vanishing $z_d$ or $z_e$)
\begin{gather}
z_{a^{\pm 1}}z_{a^{\mp 1}}^* + z_{g}z_{f}^* = 0, \label{fir'}\\
z_{a^{\pm 1}}z_{g}^* + z_{g}z_{a^{\pm 1}}^* = 0, \label{sec'}\\
z_{d}z_{e}^* + z_{e}z_{d}^* + z_{a^{\pm 1}}z_{b}^* + z_{b}z_{a^{\pm 1}}^* + z_{a^{\mp 1}}z_{c}^* + z_{c}z_{a^{\mp 1}}^* = 0, \label{fou'}\\
z_{g}z_{e}^* + z_{e}z_{g}^* + z_{d}z_{a^{\pm 1}}^* + z_{a^{\pm 1}}z_{d}^* = 0, \label{six} \\
z_{a^{\pm 1}}z_{e}^* + z_{e}z_{a^{\mp 1}}^* + z_{d}z_{g}^* + z_{f}z_{d}^* = 0, \label{eig}
\end{gather}
with $g,f\in \{b,c\}$ and $g\neq f$. From eqs. \eqref{sec'}, it follows \emph{e.g.} $e^{i\theta_b}=t_1ie^{i\theta_{a^{-1}}}$ and $e^{i\theta_c}=t_2ie^{i\theta_{a}}$ ($t_{1,2}$ arbitrary signs), while from \eqref{fir'} one has:
\begin{gather*}
|z_{a}||z_{a^{-1}}|=|z_b||z_c|, \\
s_1\coloneqq t_1=-t_2, \\
e^{i\theta} \coloneqq  e^{i\theta_{a}} = s_2e^{i\theta_{a^{-1}}},
\end{gather*}
which are consistent with all of the \eqref{sec'}.

Therefore, we can satisfy the previous conditions and the normalization in \eqref{unit} arbitrarily defining some real parameters such that the transition scalars are given by:
\begin{align*}
z_{a}&=\nu\sqrt{p}\sqrt{q}e^{i\theta},\ &z_{a^{-1}}&=s_2\nu\sqrt{1-p}\sqrt{1-q}e^{i\theta}, \\
z_{b}&=s_2s_1i\nu\sqrt{p}\sqrt{1-q}e^{i\theta},\ &z_{c}&=-s_1i\nu\sqrt{1-p}\sqrt{q}e^{i\theta}, \\
z_{e} & = \mu \alpha e^{i\theta_e},\ & z_{d} & = \mu\beta e^{i\theta_d},
\end{align*}
where $p,q\in (0,1)$, $\mu \in [0,1)$, $\alpha \in [0,1]$ and $\nu\coloneqq \sqrt{1-\mu^2}, \beta \coloneqq \sqrt{1-\alpha^2}$.
\paragraph{{\bf Case} $z_d=z_e=0$ ($\mu = 0$).} The \eqref{fou'} are already satisfied, while eqs. \eqref{six},\eqref{eig} turn out to be trivial. Finally, we conclude that the transition scalars are:
\begin{gather*}
z_{a}=\sqrt{p}\sqrt{q}e^{i\theta},\ z_{a^{-1}}=s_2\sqrt{1-p}\sqrt{1-q}e^{i\theta}, \\
z_{b}=s_2s_1i\sqrt{p}\sqrt{1-q}e^{i\theta},\ z_{c}=-s_1i\sqrt{1-p}\sqrt{q}e^{i\theta}.
\end{gather*}
\paragraph{{\bf Case} $z_e=0$ ($\alpha = 0$).} The \eqref{fou'} are already satisfied. From \eqref{six} one has $e^{i\theta_d}=s_3ie^{i\theta}$, while from \eqref{eig}
\begin{gather*}
e^{2i\theta_d} = -s_2e^{2i\theta}\ \Rightarrow\ s_2=+1, \\
|z_b|=|z_c|.
\end{gather*}
We conclude that the transition scalars are (up to a global phase factor)
\begin{gather*}
z_{a}=\nu p,\ z_{a^{-1}}=\nu (1-p), \\
z_{b}=s_1i\nu\sqrt{p}\sqrt{1-p},\ z_{c}=-s_1i\nu\sqrt{p}\sqrt{1-p},\\ z_d = s_3i\mu ,
\end{gather*}
for $p,\mu \in (0,1)$.
\paragraph{{\bf Case} $z_d=0$ ($\beta = 0$).} The \eqref{fou'} are already satisfied. From \eqref{six} one has $e^{i\theta_e}=s'_3e^{i\theta}\coloneqq -s_1s_3e^{i\theta}$ (this definition is convenient in view of the next case $z_e,z_d\neq 0$). From \eqref{eig} one gets
\begin{gather*}
e^{2i\theta_e}=-s_2e^{2i\theta}\ \Rightarrow\ s_2=-1, \\
|z_{a}|=|z_{a^{-1}}| .
\end{gather*}
We thus conclude that the transition scalars are (up to a global phase factor)
\begin{gather*}
z_{a} =\nu\sqrt{p}\sqrt{1-p},\ z_{a^{-1}} =-\nu\sqrt{p}\sqrt{1-p}, \\
z_{b} =-s_1i\nu p,\ z_{c} =-s_1i\nu (1-p),\ z_e = -s_1s_3\mu
\end{gather*}
for $p,\mu\in (0,1)$.
\paragraph{{\bf Case} $z_e,z_d\neq 0$ ($\mu,\alpha,\beta \neq 0$).} \eqref{fou'} reads
\[
z_{d}z_{e}^* + z_{e}z_{d}^* = 0 \ \Rightarrow\ e^{i\theta_e} = s_4ie^{i\theta_d}.
\]
Substituting in eqs. \eqref{six}, one has
\begin{gather*}
s_1s_2s_4|z_b||z_e|\cos(\theta_d-\theta)=-|z_d||z_a|\cos(\theta_d-\theta), \\
s_1s_2s_4|z_c||z_e|\cos(\theta_d-\theta)=|z_d||z_{a^{-1}}|\cos(\theta_d-\theta),
\end{gather*}
which can be satisfied only if $\cos(\theta_d-\theta) =0$, implying
that $e^{i\theta_d}=s_3ie^{i\theta}$. From \eqref{eig} we have
\begin{equation}\label{g}
s_1|z_e|(|z_a|+s_2|z_{a^{-1}}|)=s_4|z_d|(s_2|z_b|-|z_c|).
\end{equation}
Notice that a change of the sign $s_1s_4$ affects last equation just by a relabeling $|z_a| \leftrightarrow |z_ {a^{-1}}|$ (if $s_2=-1$) or $|z_b| \leftrightarrow |z_c|$ (if $s_2=+1$), under which the unitarity conditions are invariant: thus we can set $s_1s_4=+1$. From \eqref{g}, one has to impose a positivity condition according to the choice of $s_2$, i.e. $|z_b|-|z_c| > 0$ or $|z_{a^{-1}}| - |z_a|>0$, which imply some conditions on the $p,q$. Finally, one also finds the expression of $\alpha $ in terms of $p,q$ and $s_2$. We conclude that the transition scalars are (up to a phase factor):
\begin{align*}
z_{a} & =\nu\sqrt{p}\sqrt{q},\ & z_{a^{-1}} & =s_2\nu\sqrt{1-p}\sqrt{1-q}, \\
z_{b} & =s_2s_1i\nu\sqrt{p}\sqrt{1-q},\ & z_{c} & =-s_1i\nu\sqrt{1-p}\sqrt{q}, \\
z_{e} & =-s_1s_3\mu \alpha ,\ & z_{d} & =s_3i\mu \beta ,
\end{align*}
where $\alpha =\sqrt{p}\sqrt{1-q}-s_2\sqrt{1-p}\sqrt{q}$ and $p,q,\mu\in (0,1)$, while $p>q$ if $s_2=+1$, $(1-q)>p$ if $s_2=-1$.

In Eq. \eqref{eq:scalar-cg} of Sec. \ref{dihedral-qw} we posed $\cos\theta:=\sqrt{p}$, $\sin\theta:=-s_1\sqrt{1-p}$, $\cos\theta':=\sqrt{q}$, $\sin\theta'=s_1s_2\sqrt{1-q}$, in Eq. \eqref{eq:dirac} $s\coloneqq s_2s_3$, and in Eq. \eqref{disp-rel} $\delta \coloneqq z_a+z_{a^{-1}}$ and $\gamma \coloneqq z_e$.


\bibliography{scalar-qws}

\begin{thebibliography}{32}%
\makeatletter
\providecommand \@ifxundefined [1]{%
 \@ifx{#1\undefined}
}%
\providecommand \@ifnum [1]{%
 \ifnum #1\expandafter \@firstoftwo
 \else \expandafter \@secondoftwo
 \fi
}%
\providecommand \@ifx [1]{%
 \ifx #1\expandafter \@firstoftwo
 \else \expandafter \@secondoftwo
 \fi
}%
\providecommand \natexlab [1]{#1}%
\providecommand \enquote  [1]{``#1''}%
\providecommand \bibnamefont  [1]{#1}%
\providecommand \bibfnamefont [1]{#1}%
\providecommand \citenamefont [1]{#1}%
\providecommand \href@noop [0]{\@secondoftwo}%
\providecommand \href [0]{\begingroup \@sanitize@url \@href}%
\providecommand \@href[1]{\@@startlink{#1}\@@href}%
\providecommand \@@href[1]{\endgroup#1\@@endlink}%
\providecommand \@sanitize@url [0]{\catcode `\\12\catcode `\$12\catcode
  `\&12\catcode `\#12\catcode `\^12\catcode `\_12\catcode `\%12\relax}%
\providecommand \@@startlink[1]{}%
\providecommand \@@endlink[0]{}%
\providecommand \url  [0]{\begingroup\@sanitize@url \@url }%
\providecommand \@url [1]{\endgroup\@href {#1}{\urlprefix }}%
\providecommand \urlprefix  [0]{URL }%
\providecommand \Eprint [0]{\href }%
\providecommand \doibase [0]{http://dx.doi.org/}%
\providecommand \selectlanguage [0]{\@gobble}%
\providecommand \bibinfo  [0]{\@secondoftwo}%
\providecommand \bibfield  [0]{\@secondoftwo}%
\providecommand \translation [1]{[#1]}%
\providecommand \BibitemOpen [0]{}%
\providecommand \bibitemStop [0]{}%
\providecommand \bibitemNoStop [0]{.\EOS\space}%
\providecommand \EOS [0]{\spacefactor3000\relax}%
\providecommand \BibitemShut  [1]{\csname bibitem#1\endcsname}%
\let\auto@bib@innerbib\@empty
\bibitem [{\citenamefont {Einstein}(1905)}]{E05}%
  \BibitemOpen
  \bibfield  {author} {\bibinfo {author} {\bibfnamefont {A.}~\bibnamefont
  {Einstein}},\ }\href@noop {} {\bibfield  {journal} {\bibinfo  {journal}
  {Annalen der Physik}\ }\textbf {\bibinfo {volume} {17}},\ \bibinfo {pages}
  {549} (\bibinfo {year} {1905})}\BibitemShut {NoStop}%
\bibitem [{\citenamefont {Aharonov}\ \emph {et~al.}(1993)\citenamefont
  {Aharonov}, \citenamefont {Davidovich},\ and\ \citenamefont
  {Zagury}}]{ADZ93}%
  \BibitemOpen
  \bibfield  {author} {\bibinfo {author} {\bibfnamefont {Y.}~\bibnamefont
  {Aharonov}}, \bibinfo {author} {\bibfnamefont {L.}~\bibnamefont
  {Davidovich}}, \ and\ \bibinfo {author} {\bibfnamefont {N.}~\bibnamefont
  {Zagury}},\ }\href {\doibase 10.1103/PhysRevA.48.1687} {\bibfield  {journal}
  {\bibinfo  {journal} {Phys. Rev. A}\ }\textbf {\bibinfo {volume} {48}},\
  \bibinfo {pages} {1687} (\bibinfo {year} {1993})}\BibitemShut {NoStop}%
\bibitem [{\citenamefont {Meyer}(1996{\natexlab{a}})}]{M96}%
  \BibitemOpen
  \bibfield  {author} {\bibinfo {author} {\bibfnamefont {D.~A.}\ \bibnamefont
  {Meyer}},\ }\href {\doibase 10.1007/BF02199356} {\bibfield  {journal}
  {\bibinfo  {journal} {Journal of Statistical Physics}\ }\textbf {\bibinfo
  {volume} {85}},\ \bibinfo {pages} {551} (\bibinfo {year}
  {1996}{\natexlab{a}})}\BibitemShut {NoStop}%
\bibitem [{\citenamefont {Ambainis}\ \emph {et~al.}(2001)\citenamefont
  {Ambainis}, \citenamefont {Bach}, \citenamefont {Nayak}, \citenamefont
  {Vishwanath},\ and\ \citenamefont {Watrous}}]{AB01}%
  \BibitemOpen
  \bibfield  {author} {\bibinfo {author} {\bibfnamefont {A.}~\bibnamefont
  {Ambainis}}, \bibinfo {author} {\bibfnamefont {E.}~\bibnamefont {Bach}},
  \bibinfo {author} {\bibfnamefont {A.}~\bibnamefont {Nayak}}, \bibinfo
  {author} {\bibfnamefont {A.}~\bibnamefont {Vishwanath}}, \ and\ \bibinfo
  {author} {\bibfnamefont {J.}~\bibnamefont {Watrous}},\ }in\ \href {\doibase
  10.1145/380752.380757} {\emph {\bibinfo {booktitle} {Proceedings of the
  Thirty-third Annual ACM Symposium on Theory of Computing}}},\ \bibinfo
  {series and number} {STOC '01}\ (\bibinfo  {publisher} {ACM},\ \bibinfo
  {address} {New York, NY, USA},\ \bibinfo {year} {2001})\ pp.\ \bibinfo
  {pages} {37--49}\BibitemShut {NoStop}%
\bibitem [{\citenamefont {Severini}(2003)}]{S03}%
  \BibitemOpen
  \bibfield  {author} {\bibinfo {author} {\bibfnamefont {S.}~\bibnamefont
  {Severini}},\ }\href {\doibase 10.1137/S0895479802410293} {\bibfield
  {journal} {\bibinfo  {journal} {SIAM Journal on Matrix Analysis and
  Applications}\ }\textbf {\bibinfo {volume} {25}},\ \bibinfo {pages} {295}
  (\bibinfo {year} {2003})}\BibitemShut {NoStop}%
\bibitem [{\citenamefont {Knight}\ \emph {et~al.}(2004)\citenamefont {Knight},
  \citenamefont {Rold\'an},\ and\ \citenamefont {Sipe}}]{KRS04}%
  \BibitemOpen
  \bibfield  {author} {\bibinfo {author} {\bibfnamefont {P.~L.}\ \bibnamefont
  {Knight}}, \bibinfo {author} {\bibfnamefont {E.}~\bibnamefont {Rold\'an}}, \
  and\ \bibinfo {author} {\bibfnamefont {J.~E.}\ \bibnamefont {Sipe}},\ }\href
  {\doibase 10.1080/09500340408232489} {\bibfield  {journal} {\bibinfo
  {journal} {Journal of Modern Optics}\ }\textbf {\bibinfo {volume} {51}},\
  \bibinfo {pages} {1761} (\bibinfo {year} {2004})}\BibitemShut {NoStop}%
\bibitem [{\citenamefont {Montanaro}(2007)}]{M07}%
  \BibitemOpen
  \bibfield  {author} {\bibinfo {author} {\bibfnamefont {A.}~\bibnamefont
  {Montanaro}},\ }\href {http://dl.acm.org/citation.cfm?id=2011706.2011711}
  {\bibfield  {journal} {\bibinfo  {journal} {Quantum Info. Comput.}\ }\textbf
  {\bibinfo {volume} {7}},\ \bibinfo {pages} {93} (\bibinfo {year}
  {2007})}\BibitemShut {NoStop}%
\bibitem [{\citenamefont {Kogut}(1983)}]{K83}%
  \BibitemOpen
  \bibfield  {author} {\bibinfo {author} {\bibfnamefont {J.~B.}\ \bibnamefont
  {Kogut}},\ }\href {\doibase 10.1103/RevModPhys.55.775} {\bibfield  {journal}
  {\bibinfo  {journal} {Rev. Mod. Phys.}\ }\textbf {\bibinfo {volume} {55}},\
  \bibinfo {pages} {775} (\bibinfo {year} {1983})}\BibitemShut {NoStop}%
\bibitem [{\citenamefont {Chin}\ \emph {et~al.}(1984)\citenamefont {Chin},
  \citenamefont {Negele},\ and\ \citenamefont {Koonin}}]{CNK84}%
  \BibitemOpen
  \bibfield  {author} {\bibinfo {author} {\bibfnamefont {S.}~\bibnamefont
  {Chin}}, \bibinfo {author} {\bibfnamefont {J.}~\bibnamefont {Negele}}, \ and\
  \bibinfo {author} {\bibfnamefont {S.}~\bibnamefont {Koonin}},\ }\href
  {\doibase http://dx.doi.org/10.1016/0003-4916(84)90050-2} {\bibfield
  {journal} {\bibinfo  {journal} {Annals of Physics}\ }\textbf {\bibinfo
  {volume} {157}},\ \bibinfo {pages} {140 } (\bibinfo {year}
  {1984})}\BibitemShut {NoStop}%
\bibitem [{\citenamefont {Arnault}\ and\ \citenamefont
  {Debbasch}(2016)}]{AD16}%
  \BibitemOpen
  \bibfield  {author} {\bibinfo {author} {\bibfnamefont {P.}~\bibnamefont
  {Arnault}}\ and\ \bibinfo {author} {\bibfnamefont {F.}~\bibnamefont
  {Debbasch}},\ }\href@noop {} {\bibfield  {journal} {\bibinfo  {journal}
  {\href{http://arxiv.org/abs/1508.00038v4}{arXiv:1508.00038v4 [quant-ph]}}\ }
  (\bibinfo {year} {2016})}\BibitemShut {NoStop}%
\bibitem [{\citenamefont {Childs}\ \emph {et~al.}(2003)\citenamefont {Childs},
  \citenamefont {Cleve}, \citenamefont {Deotto}, \citenamefont {Farhi},
  \citenamefont {Gutmann},\ and\ \citenamefont {Spielman}}]{CCD03}%
  \BibitemOpen
  \bibfield  {author} {\bibinfo {author} {\bibfnamefont {A.~M.}\ \bibnamefont
  {Childs}}, \bibinfo {author} {\bibfnamefont {R.}~\bibnamefont {Cleve}},
  \bibinfo {author} {\bibfnamefont {E.}~\bibnamefont {Deotto}}, \bibinfo
  {author} {\bibfnamefont {E.}~\bibnamefont {Farhi}}, \bibinfo {author}
  {\bibfnamefont {S.}~\bibnamefont {Gutmann}}, \ and\ \bibinfo {author}
  {\bibfnamefont {D.~A.}\ \bibnamefont {Spielman}},\ }in\ \href {\doibase
  10.1145/780542.780552} {\emph {\bibinfo {booktitle} {Proceedings of the
  Thirty-fifth Annual ACM Symposium on Theory of Computing}}},\ \bibinfo
  {series and number} {STOC '03}\ (\bibinfo  {publisher} {ACM},\ \bibinfo
  {address} {New York, NY, USA},\ \bibinfo {year} {2003})\ pp.\ \bibinfo
  {pages} {59--68}\BibitemShut {NoStop}%
\bibitem [{\citenamefont {Ambainis}(2004)}]{A04}%
  \BibitemOpen
  \bibfield  {author} {\bibinfo {author} {\bibfnamefont {A.}~\bibnamefont
  {Ambainis}},\ }in\ \href {\doibase 10.1109/FOCS.2004.54} {\emph {\bibinfo
  {booktitle} {Foundations of Computer Science, 2004. Proceedings. 45th Annual
  IEEE Symposium on}}}\ (\bibinfo {year} {2004})\ pp.\ \bibinfo {pages}
  {22--31}\BibitemShut {NoStop}%
\bibitem [{\citenamefont {Magniez}\ \emph {et~al.}(2005)\citenamefont
  {Magniez}, \citenamefont {Santha},\ and\ \citenamefont {Szegedy}}]{MFS05}%
  \BibitemOpen
  \bibfield  {author} {\bibinfo {author} {\bibfnamefont {F.}~\bibnamefont
  {Magniez}}, \bibinfo {author} {\bibfnamefont {M.}~\bibnamefont {Santha}}, \
  and\ \bibinfo {author} {\bibfnamefont {M.}~\bibnamefont {Szegedy}},\ }in\
  \href {http://dl.acm.org/citation.cfm?id=1070432.1070591} {\emph {\bibinfo
  {booktitle} {Proceedings of the Sixteenth Annual ACM-SIAM Symposium on
  Discrete Algorithms}}}\ (\bibinfo {year} {2005})\ pp.\ \bibinfo {pages}
  {1109--1117}\BibitemShut {NoStop}%
\bibitem [{\citenamefont {Bisio}\ \emph
  {et~al.}(2015{\natexlab{a}})\citenamefont {Bisio}, \citenamefont {D'Ariano},\
  and\ \citenamefont {Tosini}}]{BDT15}%
  \BibitemOpen
  \bibfield  {author} {\bibinfo {author} {\bibfnamefont {A.}~\bibnamefont
  {Bisio}}, \bibinfo {author} {\bibfnamefont {G.~M.}\ \bibnamefont {D'Ariano}},
  \ and\ \bibinfo {author} {\bibfnamefont {A.}~\bibnamefont {Tosini}},\ }\href
  {\doibase http://dx.doi.org/10.1016/j.aop.2014.12.016} {\bibfield  {journal}
  {\bibinfo  {journal} {Annals of Physics}\ }\textbf {\bibinfo {volume}
  {354}},\ \bibinfo {pages} {244 } (\bibinfo {year}
  {2015}{\natexlab{a}})}\BibitemShut {NoStop}%
\bibitem [{\citenamefont {D'Ariano}\ and\ \citenamefont
  {Perinotti}(2014)}]{DP14}%
  \BibitemOpen
  \bibfield  {author} {\bibinfo {author} {\bibfnamefont {G.~M.}\ \bibnamefont
  {D'Ariano}}\ and\ \bibinfo {author} {\bibfnamefont {P.}~\bibnamefont
  {Perinotti}},\ }\href {\doibase 10.1103/PhysRevA.90.062106} {\bibfield
  {journal} {\bibinfo  {journal} {Phys. Rev. A}\ }\textbf {\bibinfo {volume}
  {90}},\ \bibinfo {pages} {062106} (\bibinfo {year} {2014})}\BibitemShut
  {NoStop}%
\bibitem [{\citenamefont {Arrighi}\ \emph {et~al.}(2014)\citenamefont
  {Arrighi}, \citenamefont {Facchini},\ and\ \citenamefont
  {Forets}}]{ArrighiNJP14}%
  \BibitemOpen
  \bibfield  {author} {\bibinfo {author} {\bibfnamefont {P.}~\bibnamefont
  {Arrighi}}, \bibinfo {author} {\bibfnamefont {S.}~\bibnamefont {Facchini}}, \
  and\ \bibinfo {author} {\bibfnamefont {M.}~\bibnamefont {Forets}},\ }\href
  {http://stacks.iop.org/1367-2630/16/i=9/a=093007} {\bibfield  {journal}
  {\bibinfo  {journal} {New Journal of Physics}\ }\textbf {\bibinfo {volume}
  {16}},\ \bibinfo {pages} {093007} (\bibinfo {year} {2014})}\BibitemShut
  {NoStop}%
\bibitem [{\citenamefont {Bibeau-Delisle}\ \emph {et~al.}(2015)\citenamefont
  {Bibeau-Delisle}, \citenamefont {Bisio}, \citenamefont {D'Ariano},
  \citenamefont {Perinotti},\ and\ \citenamefont {Tosini}}]{BBDPT15}%
  \BibitemOpen
  \bibfield  {author} {\bibinfo {author} {\bibfnamefont {A.}~\bibnamefont
  {Bibeau-Delisle}}, \bibinfo {author} {\bibfnamefont {A.}~\bibnamefont
  {Bisio}}, \bibinfo {author} {\bibfnamefont {G.~M.}\ \bibnamefont {D'Ariano}},
  \bibinfo {author} {\bibfnamefont {P.}~\bibnamefont {Perinotti}}, \ and\
  \bibinfo {author} {\bibfnamefont {A.}~\bibnamefont {Tosini}},\ }\href
  {http://stacks.iop.org/0295-5075/109/i=5/a=50003} {\bibfield  {journal}
  {\bibinfo  {journal} {EPL (Europhysics Letters)}\ }\textbf {\bibinfo {volume}
  {109}},\ \bibinfo {pages} {50003} (\bibinfo {year} {2015})}\BibitemShut
  {NoStop}%
\bibitem [{\citenamefont {Bisio}\ \emph
  {et~al.}(2015{\natexlab{b}})\citenamefont {Bisio}, \citenamefont {D'Ariano},\
  and\ \citenamefont {Perinotti}}]{BDP15}%
  \BibitemOpen
  \bibfield  {author} {\bibinfo {author} {\bibfnamefont {A.}~\bibnamefont
  {Bisio}}, \bibinfo {author} {\bibfnamefont {G.~M.}\ \bibnamefont {D'Ariano}},
  \ and\ \bibinfo {author} {\bibfnamefont {P.}~\bibnamefont {Perinotti}},\
  }\href@noop {} {\bibfield  {journal} {\bibinfo  {journal}
  {\href{http://arxiv.org/abs/1503.01017v1}{arXiv:1503.01017v1 [quant-ph]}}\ }
  (\bibinfo {year} {2015}{\natexlab{b}})}\BibitemShut {NoStop}%
\bibitem [{\citenamefont {Arrighi}\ \emph {et~al.}(2015)\citenamefont
  {Arrighi}, \citenamefont {Facchini},\ and\ \citenamefont {Forets}}]{AFF15}%
  \BibitemOpen
  \bibfield  {author} {\bibinfo {author} {\bibfnamefont {P.}~\bibnamefont
  {Arrighi}}, \bibinfo {author} {\bibfnamefont {S.}~\bibnamefont {Facchini}}, \
  and\ \bibinfo {author} {\bibfnamefont {M.}~\bibnamefont {Forets}},\
  }\href@noop {} {\bibfield  {journal} {\bibinfo  {journal}
  {\href{http://arxiv.org/abs/1505.07023v1}{arXiv:1505.07023v1 [quant-ph]}}\ }
  (\bibinfo {year} {2015})}\BibitemShut {NoStop}%
\bibitem [{\citenamefont {Bialynicki-Birula}(1994)}]{BB94}%
  \BibitemOpen
  \bibfield  {author} {\bibinfo {author} {\bibfnamefont {I.}~\bibnamefont
  {Bialynicki-Birula}},\ }\href {\doibase 10.1103/PhysRevD.49.6920} {\bibfield
  {journal} {\bibinfo  {journal} {Phys. Rev. D}\ }\textbf {\bibinfo {volume}
  {49}},\ \bibinfo {pages} {6920} (\bibinfo {year} {1994})}\BibitemShut
  {NoStop}%
\bibitem [{\citenamefont {Farrelly}\ and\ \citenamefont {Short}(2014)}]{FS14}%
  \BibitemOpen
  \bibfield  {author} {\bibinfo {author} {\bibfnamefont {T.~C.}\ \bibnamefont
  {Farrelly}}\ and\ \bibinfo {author} {\bibfnamefont {A.~J.}\ \bibnamefont
  {Short}},\ }\href {\doibase 10.1103/PhysRevA.89.062109} {\bibfield  {journal}
  {\bibinfo  {journal} {Phys. Rev. A}\ }\textbf {\bibinfo {volume} {89}},\
  \bibinfo {pages} {062109} (\bibinfo {year} {2014})}\BibitemShut {NoStop}%
\bibitem [{\citenamefont {Bisio}\ \emph {et~al.}(2016)\citenamefont {Bisio},
  \citenamefont {D'Ariano},\ and\ \citenamefont {Perinotti}}]{BDP16}%
  \BibitemOpen
  \bibfield  {author} {\bibinfo {author} {\bibfnamefont {A.}~\bibnamefont
  {Bisio}}, \bibinfo {author} {\bibfnamefont {G.~M.}\ \bibnamefont {D'Ariano}},
  \ and\ \bibinfo {author} {\bibfnamefont {P.}~\bibnamefont {Perinotti}},\
  }\href {\doibase http://dx.doi.org/10.1016/j.aop.2016.02.009} {\bibfield
  {journal} {\bibinfo  {journal} {Annals of Physics}\ }\textbf {\bibinfo
  {volume} {368}},\ \bibinfo {pages} {177 } (\bibinfo {year}
  {2016})}\BibitemShut {NoStop}%
\bibitem [{\citenamefont {D'Ariano}\ \emph {et~al.}(2015)\citenamefont
  {D'Ariano}, \citenamefont {Erba}, \citenamefont {Perinotti},\ and\
  \citenamefont {Tosini}}]{DEPT15}%
  \BibitemOpen
  \bibfield  {author} {\bibinfo {author} {\bibfnamefont {G.~M.}\ \bibnamefont
  {D'Ariano}}, \bibinfo {author} {\bibfnamefont {M.}~\bibnamefont {Erba}},
  \bibinfo {author} {\bibfnamefont {P.}~\bibnamefont {Perinotti}}, \ and\
  \bibinfo {author} {\bibfnamefont {A.}~\bibnamefont {Tosini}},\ }\href@noop {}
  {\bibfield  {journal} {\bibinfo  {journal}
  {\href{http://arxiv.org/abs/1511.03992v1}{arXiv:1511.03992v1 [quant-ph]}}\ }
  (\bibinfo {year} {2015})}\BibitemShut {NoStop}%
\bibitem [{\citenamefont {Patel}\ \emph {et~al.}(2005)\citenamefont {Patel},
  \citenamefont {Raghunathan},\ and\ \citenamefont
  {Rungta}}]{PhysRevA.71.032347}%
  \BibitemOpen
  \bibfield  {author} {\bibinfo {author} {\bibfnamefont {A.}~\bibnamefont
  {Patel}}, \bibinfo {author} {\bibfnamefont {K.~S.}\ \bibnamefont
  {Raghunathan}}, \ and\ \bibinfo {author} {\bibfnamefont {P.}~\bibnamefont
  {Rungta}},\ }\href {\doibase 10.1103/PhysRevA.71.032347} {\bibfield
  {journal} {\bibinfo  {journal} {Phys. Rev. A}\ }\textbf {\bibinfo {volume}
  {71}},\ \bibinfo {pages} {032347} (\bibinfo {year} {2005})}\BibitemShut
  {NoStop}%
\bibitem [{\citenamefont {Portugal}\ \emph {et~al.}(2015)\citenamefont
  {Portugal}, \citenamefont {Boettcher},\ and\ \citenamefont
  {Falkner}}]{PhysRevA.91.052319}%
  \BibitemOpen
  \bibfield  {author} {\bibinfo {author} {\bibfnamefont {R.}~\bibnamefont
  {Portugal}}, \bibinfo {author} {\bibfnamefont {S.}~\bibnamefont {Boettcher}},
  \ and\ \bibinfo {author} {\bibfnamefont {S.}~\bibnamefont {Falkner}},\ }\href
  {\doibase 10.1103/PhysRevA.91.052319} {\bibfield  {journal} {\bibinfo
  {journal} {Phys. Rev. A}\ }\textbf {\bibinfo {volume} {91}},\ \bibinfo
  {pages} {052319} (\bibinfo {year} {2015})}\BibitemShut {NoStop}%
\bibitem [{\citenamefont {Santos}\ \emph {et~al.}(2015)\citenamefont {Santos},
  \citenamefont {Portugal},\ and\ \citenamefont
  {Boettcher}}]{Santos:2015:MCQ:2822142.2822148}%
  \BibitemOpen
  \bibfield  {author} {\bibinfo {author} {\bibfnamefont {R.~A.}\ \bibnamefont
  {Santos}}, \bibinfo {author} {\bibfnamefont {R.}~\bibnamefont {Portugal}}, \
  and\ \bibinfo {author} {\bibfnamefont {S.}~\bibnamefont {Boettcher}},\ }\href
  {\doibase 10.1007/s11128-015-1042-9} {\bibfield  {journal} {\bibinfo
  {journal} {Quantum Information Processing}\ }\textbf {\bibinfo {volume}
  {14}},\ \bibinfo {pages} {3179} (\bibinfo {year} {2015})}\BibitemShut
  {NoStop}%
\bibitem [{\citenamefont {Acevedo}\ \emph {et~al.}(2008)\citenamefont
  {Acevedo}, \citenamefont {Roland},\ and\ \citenamefont {Cerf}}]{ARC08}%
  \BibitemOpen
  \bibfield  {author} {\bibinfo {author} {\bibfnamefont {O.~L.}\ \bibnamefont
  {Acevedo}}, \bibinfo {author} {\bibfnamefont {J.}~\bibnamefont {Roland}}, \
  and\ \bibinfo {author} {\bibfnamefont {N.~J.}\ \bibnamefont {Cerf}},\ }\href
  {http://dl.acm.org/citation.cfm?id=2011752.2011757} {\bibfield  {journal}
  {\bibinfo  {journal} {Quantum Info. Comput.}\ }\textbf {\bibinfo {volume}
  {8}},\ \bibinfo {pages} {68} (\bibinfo {year} {2008})}\BibitemShut {NoStop}%
\bibitem [{\citenamefont {Meyer}(1996{\natexlab{b}})}]{MD96}%
  \BibitemOpen
  \bibfield  {author} {\bibinfo {author} {\bibfnamefont {D.~A.}\ \bibnamefont
  {Meyer}},\ }\href {\doibase http://dx.doi.org/10.1016/S0375-9601(96)00745-1}
  {\bibfield  {journal} {\bibinfo  {journal} {Physics Letters A}\ }\textbf
  {\bibinfo {volume} {223}},\ \bibinfo {pages} {337} (\bibinfo {year}
  {1996}{\natexlab{b}})}\BibitemShut {NoStop}%
\bibitem [{\citenamefont {Acevedo}\ and\ \citenamefont {Gobron}(2006)}]{AG06}%
  \BibitemOpen
  \bibfield  {author} {\bibinfo {author} {\bibfnamefont {O.~L.}\ \bibnamefont
  {Acevedo}}\ and\ \bibinfo {author} {\bibfnamefont {T.}~\bibnamefont
  {Gobron}},\ }\href {http://stacks.iop.org/0305-4470/39/i=3/a=011} {\bibfield
  {journal} {\bibinfo  {journal} {Journal of Physics A: Mathematical and
  General}\ }\textbf {\bibinfo {volume} {39}},\ \bibinfo {pages} {585}
  (\bibinfo {year} {2006})}\BibitemShut {NoStop}%
\bibitem [{\citenamefont {Kempf}\ and\ \citenamefont {Portugal}(2009)}]{KP09}%
  \BibitemOpen
  \bibfield  {author} {\bibinfo {author} {\bibfnamefont {A.}~\bibnamefont
  {Kempf}}\ and\ \bibinfo {author} {\bibfnamefont {R.}~\bibnamefont
  {Portugal}},\ }\href {\doibase 10.1103/PhysRevA.79.052317} {\bibfield
  {journal} {\bibinfo  {journal} {Phys. Rev. A}\ }\textbf {\bibinfo {volume}
  {79}},\ \bibinfo {pages} {052317} (\bibinfo {year} {2009})}\BibitemShut
  {NoStop}%
\bibitem [{Note1()}]{Note1}%
  \BibitemOpen
  \bibinfo {note} {The motivation of keeping all matrices nonvanishing
  originates in Ref. \cite {DP14} from the logic of deriving the graph
  inversely from a set of nonnull matrices. This is relavant from a derivation
  of the QW (more generally quantum automaton) from general topological
  principles of a countable set of interacting systems.}\BibitemShut {Stop}%
\bibitem [{Note2()}]{Note2}%
  \BibitemOpen
  \bibinfo {note} {Also in the finite Abelian case one can decompose the right
  regular representations into irreducible representations: the $k$-space is
  discrete and the diagonalization reads $C_l = \DOTSB \sum@ \slimits@
  _{j=1}^{i_l} \left | j \right >\left < j \right | e^{2\pi i\left ( \protect
  \frac {j}{i_l} \right )}$, where the wave-vectors are the $2\pi
  j/i_l$.}\BibitemShut {Stop}%
\end{thebibliography}%

\end{document}